\documentclass[11pt]{article}

\usepackage{fullpage}

\usepackage{amsmath,amssymb,amsfonts,mathrsfs, mathtools, bm, bbm, dsfont, mathrsfs, amsthm}
\usepackage{graphicx, epstopdf}

\makeatletter
\newcommand{\removelatexerror}{\let\@latex@error\@gobble}
\makeatother

\usepackage{ifthen}
\usepackage{multirow, multicol}
\usepackage{float,framed}
\usepackage{subeqnarray, array}
\usepackage{enumitem}
\usepackage[dvipsnames]{xcolor}
\usepackage{caption}
\usepackage[caption=false, font=small]{subfig}
\usepackage[hyphens]{url}
\usepackage{cite}
\usepackage{booktabs}

\usepackage{accents}

\usepackage[breaklinks, colorlinks, linkcolor=MidnightBlue, anchorcolor=MidnightBlue, citecolor=MidnightBlue, urlcolor=MidnightBlue]{hyperref}









\newcommand{\beq}{\begin{equation}}
\newcommand{\eeq}{\end{equation}}
\newcommand{\beqa}{\begin{eqnarray}}
\newcommand{\eeqa}{\end{eqnarray}}
\newcommand{\beqan}{\begin{eqnarray*}}
\newcommand{\eeqan}{\end{eqnarray*}}



\newcommand{\rank}{\text{rank }}

\newcommand{\trace}{\mbox{\rm Tr}}


\newcommand{\dom}{\textrm{dom }}





\newcommand\T{{\mathpalette\raiseT\intercal}}
\newcommand\raiseT[2]{\raisebox{0.25ex}{$#1#2$}
}

\newcommand{\epi}{{\textrm{epi }}}

\newcommand{\Aset}{\mathds{A}}

\newcommand{\Cset}{\mathds{C}}

\newcommand{\Eset}{\mathds{E}}

\newcommand{\Gset}{\mathds{G}}
\newcommand{\Hset}{\mathds{H}}

\newcommand{\Mset}{\mathds{M}}
\newcommand{\Nset}{\mathds{N}}

\newcommand{\Rset}{\mathds{R}}

\newcommand{\Wset}{\mathds{W}}
\newcommand{\Xset}{\mathds{X}}

\newcommand{\Dcal}{{\cal D}}

\newcommand{\Lcal}{{\cal L}}

\newcommand{\Pcal}{{\cal P}}

\newcommand{\Vcal}{{\cal V}}

\newcommand{\Hsf}{\sf{H}}


\newcommand{\bone}{\mathbf{1}}

\renewcommand{\v}[1]{{\bm{#1}}}

\newcommand{\ol}[1]{\ensuremath{\overline{{#1}}}}
\newcommand{\ul}[1]{\ensuremath{\underline{{#1}}}}


\newcounter{l1}
\newcounter{l2}
\newcounter{l3}
\setlength{\itemsep}{0cm} \setlength{\itemindent}{0in}
\newcommand{\bdotlist}{\begin{list}{$\bullet$}{}}
\newcommand{\bboxlist}{\begin{list}{$\Box$}{}}
\newcommand{\bbboxlist}{\begin{list}{\raisebox{.005in}{{\tiny
$\blacksquare$ \ \ }}}{}}
\newcommand{\bdashlist}{\begin{list}{$-$}{} }
\newcommand{\blist}{\begin{list}{}{} }
\newcommand{\barablist}{\begin{list}{\arabic{l1}}{\usecounter{l1}}}
\newcommand{\balphlist}{\begin{list}{(\alph{l2})}{\usecounter{l2}}}
\newcommand{\bAlphlist}{\begin{list}{\Alph{l2}.}{\usecounter{l2}}}
\newcommand{\bdiamlist}{\begin{list}{$\diamond$}{}}
\newcommand{\bromalist}{\begin{list}{(\roman{l3})}{\usecounter{l3}}}


\newtheorem{theorem}{Theorem}
\newtheorem{lemma}{Lemma}
\newtheorem{proposition}{Proposition}
\newtheorem{corollary}{Corollary}

\newtheorem{definition}{Definition}

\renewcommand{\bone}{\mathds{1}}
\newcommand{\AC}{{\sf AC}}
\newcommand{\SDP}{{\sf{SDP}}}

\newcommand{\Pac}{{\Pcal_{\AC}}}
\newcommand{\Psdp}{{\Pcal_{\SDP}}}
\newcommand{\SO}{{\sf SO}}
\newcommand{\opt}{{\sf opt}}
\newcommand{\Psocp}{{\Pcal_{\sf SOCP}}}

\newcommand{\conv}{\textrm{conv }}
\newcommand{\fc}{\textrm{c}}

\emergencystretch 3em


\newcommand{\bose}[1]{
{ \textcolor{red}{(Bose says:  #1)}}{}}

\usepackage{palatino}

\begin{document}

\title{Pricing Economic Dispatch with AC Power Flow via Local Multipliers and Conic Relaxation}
\author{Mariola Ndrio \qquad Anna Winnicki   \qquad Subhonmesh Bose\thanks{All authors are with the Department of Electrical and Computer Engineering at the University of Illinois at Urbana-Champaign, Urbana, IL 61801. Emails: \{ ndrio2, annaw5, boses\}@illinois.edu. 
		This work was partially supported by grants from the Power Systems Engineering Research Center (PSERC) and by the National Science Foundation under grant no. CAREER-2048065. 
	}}

	
	\maketitle

\begin{abstract}
We analyze pricing mechanisms in electricity markets with AC power flow equations that define a nonconvex feasible set for the economic dispatch problem. Specifically, we consider two possible pricing schemes. The first among these prices are derived from Lagrange multipliers that satisfy Karush-Kuhn-Tucker conditions for local optimality of the nonconvex market clearing problem. The second is derived from optimal dual multipliers of the convex semidefinite programming (SDP) based relaxation of the market clearing problem. Relationships between these prices, their revenue adequacy and market equilibrium properties are derived and compared. 
The SDP prices are shown to equal distribution locational marginal prices derived with second-order conic relaxations of power flow equations over radial distribution networks. We illustrate our theoretical findings through numerical experiments.
\end{abstract}

\section{Introduction}
Electricity markets rely on a bid-based security-constrained economic dispatch (ED) problem to compute dispatch and pricing decisions. Prices are derived as optimal dual multipliers of system constraints in the ED problem. These locational marginal prices (LMPs), proposed in \cite{schweppe1984}, reflect marginal system costs to meet local incremental demand. 
LMPs enjoy strong theoretical guarantees when the market clearing problem is convex, e.g., when derived with lossless linearized power flow models and convex generation costs. In this paper, we analyze price formation that accounts for nonconvexity in the market clearing problem. Nonconvexity can arise from two sources--unit commitment considerations with startup/no-load costs and the alternating current (AC) power flow equations. Pricing with binary commitment decisions have been extensively studied, e.g., see \cite{ONEILL,Hogan2003,Vazquez,Hua,Zhao,Gribik}. In this paper, we focus on price formation with nonconvexities that arise from an AC power flow model, which has received much less attention (see \cite{Garcia} for recent work). This nonconvexity is not a consequence of the cost structures of assets, but rather stems from the nature of the Kirchhoff's laws that govern the underlying power network. There is an increasing interest to efficiently and optimally solve the nonconvex market clearing problem with AC power flow, e.g., the ongoing ARPA-E GO competition. We are motivated to analyze meaningful prices that can accompany such a dispatch.

We consider two candidate pricing mechanisms that we analyze and compare in this paper. The first pricing scheme utilizes Lagrange multipliers obtained from a nonconvex dispatch problem that satisfy  Karush-Kuhn-Tucker (KKT)  conditions that are necessary for locally optimal dispatch solutions, much along the lines of \cite{Garcia}. We call these prices AC-LMPs. The second pricing mechanism utilizes optimal Lagrange multipliers from a semidefinite programming (SDP)-based convex relaxation of the economic dispatch problem as prices. We call these prices SDP-LMPs. The second pricing scheme is motivated by recent work on distribution locational marginal prices derived from a second-order cone programming (SOCP)-based convex relaxation of power flow equations in radial distribution networks (see \cite{PapavasiliouDLMPs, Caramanis, Yuan2018}). We call the latter SOCP-DLMPs in the sequel. In short, our work seeks to understand and compare economically relevant properties of AC-LMPs and SDP-LMPs, where the latter can be viewed as a generalization of SOCP-DLMPs.

The ED problem with AC power flow equations is nonconvex. We show that when this problem has zero duality gap, AC-LMPs associated with global optimal solutions and SDP-LMPs coincide. With non-zero duality gap, however, these prices can be different. Moreover, AC-LMPs associated with local--but not global--minima can be different from SDP-LMPs as well. Our derivation of this result exploits the fact that the nonconvex ED problem and its SDP relaxation share the same Lagrangian dual  program. When duality gap vanishes, the SDP relaxation essentially provides a globally optimal solution to the ED problem. Not surprisingly, the prices coincide as well. Such a relationship does not exist between the optimal solutions of the SDP relaxation and local optima or that with global optima with nonzero duality gap. As a result, these prices can be different.

The nonconvex ED problem and the SDP relaxation are dual equivalent optimization problems. Consequently, SDP-LMPs can be derived as dual optimizers of the ED problem. Note that convex hull pricing (CHP) in \cite{Gribik, Schiro} also advocates pricing via the Lagrangian dual program of a market clearing problem with nonconvex cost structures. Even though SDP-LMPs and CHPs have similar roots and their properties appear similar on the surface, there are interesting differences. For example, the authors of \cite{Gribik} study the variation of the optimal cost of the nonconvex market clearing problem to nodal power demands. They show that the convex hull of the epigraph of this optimal cost is in fact the epigraph of the optimal cost of its Lagrangian dual problem from which the CHPs are derived. One might surmise that an identical relationship holds between the optimal cost variation of the ED problem with AC power flow equations and its SDP relaxation with nodal demands. We show, however, that the argument in \cite{Gribik} breaks in our setting!

A pricing mechanism is revenue adequate when the revenue collected by the system operator (SO) from consumers is enough to cover the rents payable to suppliers. We derive a sufficient condition for revenue adequacy with AC-LMPs (and SDP-LMPs under zero duality gap). We illustrate through an example that our condition we identify is sufficient, but not necessary, for revenue adequacy.

We demonstrate that AC-LMPs always support a market equilibrium, i.e., they adequately incentivize all market participants to follow the SO-prescribed dispatch. This assertion holds even at a local optimal solution of the nonconvex ED problem.
SDP-LMPs, on the other hand, only support market equilibrium when the duality gap of the ED problem is zero. With non-zero duality gap, pricing via SDP-LMPs may require side-payments from the SO to incentivize them to follow the SO-prescribed dispatch. We characterize the duality gap of the ED problem as the minimization of two terms, the first among which is the aggregate side-payments, very much along the lines of \cite{Schiro} for CHP. However, we argue why such a formula for the duality gap does not make SDP-LMPs a minimizer of side-payments. If minimization of said payments is the ultimate goal, AC-LMPs suffice. Again, this result highlights the subtle differences between SDP-LMPs and CHPs, even though both these pricing mechanisms are derived from the Lagrangian dual of the nonconvex market clearing formulations with two different types of nonconvexities.

Finally, we prove that SDP-LMPs reduce to SOCP-DLMPs in \cite{PapavasiliouDLMPs} over radial distribution networks. This observation leverages known results in \cite{bose2015equivalent, madani2014convex} that the SDP relaxation of the ED problem over radial networks can be solved as an SOCP. Our analysis of SDP-LMPs therefore directly provides insights into revenue adequacy, market equilibrium, sensitivity of prices to demand and necessary side-payments for market participants with SOCP-DLMPs. Such an analysis is particularly timely, given recent interests in the design of retail markets with DLMPs (e.g., see  \cite{Ntakou, PapavasiliouDLMPs, Yuan2018}) to harness the flexibility of distributed energy resources at the grid-edge.

The paper is organized as follows.
In Section \ref{sec:market}, we define the ED problem with AC power flow. Then in Section \ref{sec:prices}, we introduce the two pricing mechanisms (AC-LMPs and SDP-LMPs) and establish the relationship between them. We study revenue adequacy of the pricing schemes in Section \ref{sec:ra} and market equilibrium properties in Section \ref{sec:marketEq}. We study these prices on a three-bus power network example in Section \ref{sec:3bus}. In Section \ref{sec:DLMP}, we establish the connection between SDP-LMPs and SOCP-based DLMPs. We 
conclude in Section \ref{sec:conc}.

\section{Economic Dispatch with AC Power Flow}\label{sec:market}


Consider an electric power network on $n$ buses and $m$ transmission lines. Let $\v{V} \in \Cset^n$ denote the vector of nodal voltage phasors, where $\Cset$ is the set of complex numbers. Denote by $y_{k\ell}$, the admittance of the line joining buses $k, \ell$.
The apparent power flow from bus $k$ to bus $\ell$ is
\begin{align}
\begin{aligned}
p_{k\ell} + \bm{i} q_{k\ell} 
= \v{V}^{\Hsf}  \v{\Phi}_{k \ell}  \v{V} 
+ \bm{i} \v{V}^{\Hsf}  \v{\Psi}_{k\ell}  \v{V},
\end{aligned}
\label{eq:pkl.qkl}
\end{align}
where $\v{\Phi}_{k\ell}, \v{\Psi}_{k\ell} \in \Hset^n$ that comprises all zeros except 
\begin{gather*}
[\v{\Phi}_{k \ell}]_{kk} := \frac{1}{2}(y_{k\ell} + y_{k\ell}^{\Hsf}), \ \ 
[\v{\Phi}_{k \ell}]_{k \ell} =  [\v{\Phi}_{k \ell}]_{\ell k}^{\Hsf} := -\frac{1}{2}y_{k\ell}, 
\\
[\v{\Psi}_{k \ell}]_{kk} := \frac{1}{2 \bm{i}}(y_{k\ell}^{\Hsf} - y_{k\ell}), \ \ 
[\v{\Psi}_{k \ell}]_{k \ell} :=  [\v{\Psi}_{k \ell}]_{\ell k}^{\Hsf} := \frac{1}{2\bm{i}}y_{k\ell}.
\end{gather*}
Here, $\v{A}^{\Hsf}$ calculates the conjugate transpose of an arbitrary matrix $\v{A}$, $\bm{i} := \sqrt{-1}$ and $\Hset^n \subset \Cset^{n\times n}$ is the set of Hermitian matrices.
The two summands in the right-hand-side of \eqref{eq:pkl.qkl} define the real and reactive power flows from bus $k$ to bus $\ell$, respectively.  Assume that the real power flows on the lines are constrained as
\begin{align}
p_{kl} \leq f_{k\ell} 
\label{eq:line.const}
\end{align}
for a flow limit $f_{k\ell} > 0$. Such limits typically arise from thermal considerations, but may also serve as proxies for stability constraints.\footnote{Line flow constraints are often formulated over the apparent power flow as $p_{k\ell}^2 + q_{k\ell}^2 \leq f^2_{k\ell}$ that constrain the magnitude of the current flowing over the transmission line. We consider limits on real power flow for simplicity.}
Assume that $y_{kk}$ is the shunt admittance at bus $k$. Then, the apparent power injection at bus $k$ becomes 
\begin{align*}
p_k + \bm{i} q_k  
=\v{V}^{\Hsf}  \v{\Phi}_k \v{V} + \bm{i} \v{V}^{\Hsf}  \v{\Psi}_k \v{V},
\end{align*}
where 
\begin{align}
{
	\begin{aligned}
	\v{\Phi}_k 
	&:=
	\frac{1}{2}\left( {y_{kk} + y_{kk}^{\Hsf}}\right) \bone_k \bone_k^{\Hsf} 
	+ \sum_{\ell \sim k} \v{\Phi}_{k\ell}, 
	\qquad
	\v{\Psi}_k 
	&:= 
	\frac{1}{2\bm{i}} \left( {y_{kk}^{\Hsf} - y_{kk}}\right) \bone_k \bone_k^{\Hsf}
	+ \sum_{\ell \sim k} \v{\Psi}_{k\ell},
	\end{aligned}
}
\end{align}
and $\bone_k \in \Rset^n$ is the vector of all zeros, except the $k$-th entry that is unity. The notation $\ell \sim k$ indicates that a transmission line connects buses $\ell$ and $k$ in the power network. Here, $\Rset$ is the set of all real numbers.
Voltage magnitudes across the network are deemed to remain close to rated voltage levels as $\ul{v}_k \leq | V_k | \leq \ol{v}_k$ at bus $k$, that is equivalently written as
\begin{align}
\ul{v}_k^2 \leq \v{V}^{\Hsf} \bone_k \bone_k^{\Hsf} \v{V} \leq \ol{v}_k^2.
\end{align}
Consider two assets connected at each bus -- an uncontrollable asset whose apparent power draw is fixed and known and a controllable asset whose power injection can vary within known capacity limits. Let $p_k^D$ and $q_k^D$, respectively, denote the nominal real and reactive power draws at bus $k$ from the uncontrollable asset. Similarly, let $p_k^G$ and $q_k^G$ denote the real and reactive power generation at bus $k$, respectively, that vary within known capacity limits. Collecting these limits across all buses, we write $\left( \v{p}^G, \v{q}^G \right) \in \Gset$, where
\begin{align}
\Gset := \left\lbrace \left( \v{p}^G, \v{q}^G \right) \ \vert \ 
\ul{\v{p}} \leq \v{p}^G \leq \ol{\v{p}}, \ \ul{\v{q}} \leq \v{q}^G \leq \ol{\v{q}} \right\rbrace.
\label{eq:S.def}
\end{align} 
Associated with that generation is a  dispatch cost $c_k(p_k^G, q_k^G)$. Assume that $c_k$ is jointly convex in its arguments. Such costs in wholesale markets are inferred from supply offers and demand bids. Uncontrollable assets represent the collective inelastic power demands at a bus, while generators and proxy demand resources comprise controllable assets.  

The SO seeks to compute a dispatch that minimizes the aggregate dispatch costs from the collection of grid-connected controllable assets and meets the power requirements of the uncontrollable ones, meeting the  engineering constraints of the power network as follows.
\begin{subequations}
	\begin{alignat}{2}
	\vspace{-0.1in}
	\Pac  : \
	& {\text{minimize}} && \ \ \sum_{k=1}^n c_k(p^G_k, q^G_k), 
	\notag
	\\
	& \text{subject to} 
	&&\ \ \left( \v{p}^G, \v{q}^G \right) \in \Gset, 
	\label{eq:pqBounds}
	\\
	&&& \ \ p^G_k - p^D_k = \v{V}^{\Hsf} \v{\Phi}_k \v{V}, 
	\label{eq:pBalance}
	\\
	&&& \ \ q^G_k - q^D_k = \v{V}^{\Hsf} \v{\Psi}_k \v{V}, 
	\label{eq:qBalance}
	\\
	&&& \ \ \v{V}^{\Hsf} \v{\Phi}_{k\ell} \v{V} \leq f_{k \ell}, 			
	\label{eq:fLim}
	\\
	&&&\ \ \ul{v}_k^2 \leq \v{V}^{\Hsf} \bone_k \bone_k^{\Hsf} \v{V} \leq \ol{v}_k^2 
	\label{eq:v}
	\\			
	&&& \ \ \text{for } k = 1,\ldots,n, \ \ell \sim k
	\notag
	\end{alignat}%
	\label{eq:Pac}
\end{subequations}
over $\v{p}^G, \v{q}^G$ and $\v{V}$. The boldfaced symbols collect the corresponding variables across the network. 
$\Pac$ is nonconvex, owing to quadratic equalities. 
In what follows, we consider prices to support such a dispatch.

\section{Pricing Mechanisms}
\label{sec:prices}

We consider two candidate pricing mechanisms. The first set of prices are derived from Lagrange multipliers that satisfy the
Karush-Kuhn-Tucker (KKT) optimality conditions for $\Pac$ at one of its local optima. For an optimum, such multipliers exist under certain regularity conditions such as those in \cite[Proposition 4.3.13]{Bertsekas/99}.
The other set of prices are derived from an SDP-based convex relaxation of $\Pac$. We call these prices SDP-LMPs. 
AC-LMPs and SDP-LMPs are \emph{not always equal}. In this section, we characterize the relationship between these two pricing mechanisms.
All results in this paper are derived under the assumption that $\Pac$ admits a strictly feasible point.

\begin{figure*}[h]
	\centering
	\begin{framed}
		{\small
			\begin{itemize}
				\item Primal feasibility conditions: \eqref{eq:pBalance} -- \eqref{eq:v}.
				
				\item Dual feasibility: $\ol{\mu}_{k}^{p,\star}$, $\ul{\mu}_{k}^{p,\star}$, $\ol{\mu}_{k}^{q,\star}$, $\ul{\mu}_{k}^{q,\star}$, $\mu_{k \ell}^\star$, $\ol{\mu}^{v,\star}_k$, $\ul{\mu}^{v,\star}_k \geq 0$, for $k = 1, \ldots, n$, $k\ell=1, \ldots, 2m$,  
				
				\item Stationarity conditions: For $k = 1,...,n, (k\ell)=1,...,m$: 
				\begin{subequations}
					\begin{gather}
					\left[\sum_{k = 1}^n \Lambda_k^{p,\star} \v{\Phi}_{k}  + \sum_{k = 1}^n \Lambda_k^{q,\star} \v{\Psi}_{k} 
					+  \sum_{k \ell = 1}^m \mu_{k \ell}^\star \v{\Phi}_{k \ell}  + 
					\sum_{k=1}^n \ol{\mu}^{v,\star}_k \bone_k \bone_k^\mathsf{T} 
					- \sum_{k=1}^n \ul{\mu}^{v,\star}_k \bone_k \bone_k^\mathsf{T} \right] \v{V}^\star = 0,
					\label{eq:grad.2}
					\\
					\nabla_{{p_k^G}} \left[ {c_k(p_k^{G,\star}, q_k^{G,\star})} \right] - \Lambda_k^{p,\star}+\ol{\mu}_k^{p,\star}-\ul{\mu}_k^{p,\star}
					=
					\nabla_{{q_k^G}} \left[ {c_k(p_k^{G,\star}, q_k^{G,\star})} \right] -\Lambda_k^{q,\star}+\ol{\mu}_k^{q,\star}-\ul{\mu}_k^{q,\star}=0.
					\label{eq:grad.3}
					\end{gather}
				\end{subequations}
				
				\item Complementary slackness conditions: For $k = 1,...,n, (k\ell)=1,...,m$:
				\begin{subequations}
					\begin{gather}
					\mu_{k\ell}^\star[ \v{V}^{\Hsf, \star}\v{\Phi}_{k \ell} \v{V}^\star - f_{k\ell}] 
					= 
					\ol{\mu}^v_k \left( \v{V}^{\Hsf, \star} \bone_k \bone_k^{\Hsf} \v{V}^\star - \ol{v}_k^{2} \right)
					=
					\ul{\mu}^v_k \left(\v{V}^{\Hsf, \star} \bone_k \bone_k^{\Hsf} \v{V}^\star - \ul{v}_k^{2} \right) 
					= 0,
					\label{eq:CS.1}
					\\
					\ul{\mu}^{q,\star}_k \left(q^{G,\star}_k - \ul{q}^G_k \right) 
					= 
					\ol{\mu}^{q,\star}_k \left(q^{G,\star}_k - \ol{q}^G_k \right) 
					= 
					\ul{\mu}^{p,\star}_k \left(p^{G,\star}_k - \ul{p}^G_k \right) 
					= 
					\ol{\mu}^{p,\star}_k \left(p^{G,\star}_k - \ol{p}^G_k \right) = 0.
					\label{eq:CS.2}
					\end{gather}
				\end{subequations}
			\end{itemize}
		}
	\end{framed}
	\caption{The KKT conditions for $\Pac$.}
	\label{fig:KKT.AC}
\end{figure*}

\subsection{Locational Marginal Prices From nonconvex $\Pac$}
Associate Lagrange multipliers $\ol{\mu}^p_k$, $\ul{\mu}^p_k$, $\ol{\mu}^q_k$, $\ul{\mu}^q_k$ to the upper and lower, real and reactive capacity limits in \eqref{eq:pqBounds}, $\Lambda^p_k$, $\Lambda^q_k$ with \eqref{eq:pBalance}, \eqref{eq:qBalance},  $\mu_{k\ell}$ with \eqref{eq:fLim}, and $\ol{\mu}_k^v$, $\ul{\mu}_k^v$ with the upper and lower voltage limits in \eqref{eq:v}, respectively. The KKT conditions for $\Pac$ are given in Figure \ref{fig:KKT.AC}.
\begin{definition}[AC-LMPs] The Lagrange multipliers $\v{\Lambda}^{p,\star}$ and $\v{\Lambda}^{q,\star}$, that satisfy the KKT conditions for $\Pac$ for a locally optimal dispatch $\v{p}^{G, \star}, \v{q}^{G, \star}, \v{V}^\star$ define the AC locational marginal prices (AC-LMPs) for real and reactive power, respectively, for that dispatch, assuming these multipliers exist.
\end{definition}
By definition, these prices are associated with specific local minima of $\Pac$. Denote by $J^\star_{\AC} (\v{p}^D, \v{q}^D)$, the  cost of \eqref{eq:Pac} at a \emph{global} minimum of $\Pac$, parameterized by the nodal real and reactive power demands. The feasible set of $\Pac$ is compact. Assuming that this set varies continuously in nodal demands, $J^\star_{\AC} (\v{p}^D, \v{q}^D)$ must then vary continuously with $\v{p}^D, \v{q}^D$, per \cite[Chapter 5]{still2018lectures}. 
Since $\Pac$ is nonconvex, the parametric optimal function $J^\star_\AC$ can be nonconvex. In general, it is also non-smooth. Under regularity conditions (see \cite[Proposition 3.3.3]{Bertsekas/99}), AC-LMPs associated with a global minimum are the marginal sensitivities of this optimal cost to nodal power demands, i.e., 
$\v{\Lambda}^{p,\star} = \nabla_{\v{p}^D} J^\star_{\AC}(\v{p}^D, \v{q}^D)$ and $\v{\Lambda}^{q,\star} = \nabla_{\v{q}^D} J^\star_{\AC} (\v{p}^D, \v{q}^D)$,
if $J^\star_{\AC}$ is differentiable. Here, $\nabla$ computes the gradient of its argument. 

\subsection{SDP Relaxation-Based Locational Marginal Prices}
We now define nodal prices for real and reactive powers from an SDP-based convex relaxation $\Psdp$ of $\Pac$  in \eqref{eq:Psdp}. To arrive at the relaxation, write
$ \v{V}^{\Hsf} \v{M} \v{V}$ as  $\trace( \v{M} \v{V} \v{V}^{\Hsf}) =  \trace( \v{M} \v{W}) $
for any $\v{M} \in \Cset^{n \times n}$ and $\v{W} = \v{V}\v{V}^{\Hsf}$, where $\trace$ computes the trace of a matrix. The above representation reduces quadratic forms in $\v{V}$ to linear forms in $\v{W} \in \Hset^n $ that is positive semidefinite  (henceforth denoted as $\v{W} \succeq 0$) and rank-1. Thus, $\Pac$ can be reformulated as a rank constrained SDP in $\v{W}$. Dropping the rank constraint gives
\begin{subequations}
	\begin{alignat}{2}
	\hspace{-0.09in}
	\Psdp  : \
	& {\text{minimize}} && \ \ \sum_{k=1}^n c_k(p^G_k, q^G_k), 
	\notag
	\\
	& \text{subject to} 
	&& \ \ \left( \v{p}^G, \v{q}^G \right) \in \Gset, 
	\label{eq:W.pqBounds}
	\\
	&&& \ \ p^G_k - p^D_k = \trace(\v{\Phi}_k \v{W}),  
	\label{eq:W.pBalance}
	\\
	&&& \ \ q^G_k - q^D_k = \trace( \v{\Psi}_k \v{W}),
	\label{eq:W.qBalance}
	\\
	&&& \ \ \trace(\v{\Phi}_{k\ell} \v{W} ) \leq f_{k \ell}, 			
	\label{eq:W.fLim}
	\\
	&&& \ \ \ul{v}_k^2 \leq \trace( \bone_k \bone_k^{\Hsf} \v{W}) \leq \ol{v}_k^2,  
	\label{eq:W.v}
	\\			
	&&& \ \ \v{W} \succeq 0,
	\label{eq:W.psd}
	\\
	&&& \ \ \text{for } k = 1,\ldots,n, \ \ell \sim k 
	\notag
	\end{alignat}
	\label{eq:Psdp}
\end{subequations}
over $\v{W}, \v{p}^G, \v{q}^G$. In contrast to $\Pac$, the optimization problem $\Psdp$ is convex. Associate the same Lagrange multipliers as for $\Pac$, but use $\v{\lambda}^p, \v{\lambda}^q$ instead of $\v{\Lambda}^p, \v{\Lambda}^q$ for the real and reactive power balance constraints \eqref{eq:W.pBalance} and \eqref{eq:W.qBalance} in $\Psdp$. In addition, associate $\v{U} \in \Hset^n$ as the matrix multiplier for the constraint $\v{W} \succeq 0$. The KKT optimality conditions for $\Psdp$ are then given by that for $\Pac$ in Figure \ref{fig:KKT.AC}, but with the following changes: \emph{(i)} $\v{\lambda}$'s replace $\v{\Lambda}$'s, \emph{(ii)} \eqref{eq:grad.2} changes to
\begin{align}
{
	\begin{aligned}
	\sum_{k = 1}^n \lambda_k^{p,\star} \v{\Phi}_{k} 
	+ \sum_{k = 1}^n \lambda_k^{q,\star} \v{\Psi}_{k} 
	+  \sum_{k \ell = 1}^m \mu_{k \ell}^\star \v{\Phi}_{k \ell}  
	+ \sum_{k=1}^n \ol{\mu}^{v,\star}_k \bone_k \bone_k^\mathsf{T} 
	- \sum_{k=1}^n \ul{\mu}^{v,\star}_k \bone_k \bone_k^\mathsf{T} - \v{U}^\star = 0,
	\end{aligned}
}
\end{align}
\emph{(iii)} $\trace(\v{M} \v{W}^\star )$ replaces $\v{V}^{\Hsf, \star} \v{M} \v{V}^\star$ in \eqref{eq:CS.1} for each quadratic form in $\v{V}$ and \emph{(iv)} the dual feasibility constraint $\v{U} \succeq 0$ is added to the list. 
\begin{definition}[SDP-LMPs] The Lagrange multipliers $\v{\lambda}^{p,\star}$ and $\v{\lambda}^{q,\star}$, that satisfy the KKT conditions for $\Psdp$ define the SDP locational marginal prices (SDP-LMPs) for real and reactive power, respectively.
\end{definition}
Unlike AC-LMPs, the SDP-LMPs are \emph{not} associated with a local minimum of $\Pac$.
As a result, these prices do not change with the local optimal dispatch that a nonlinear optimization solver may find. Instead, they are purely functions of the problem parameters and are robust to convergence properties of the optimization solver for $\Pac$.

Let $J^\star_{\SDP}(\v{p}^D,\v{q}^D)$ denote the optimal cost of $\Psdp$. The nature of the constraints of $\Psdp$ guarantee that $J^\star_\SDP$ is jointly convex in its arguments. It can, however, be non-smooth. SDP-LMPs are the marginal sensitivities of the optimal cost of the SDP relaxation to nodal real and reactive powers as a result of the envelope theorem (see \cite[Chapter 7]{still2018lectures}),  i.e.,
$\v{\lambda}^{p,\star} = \nabla_{\v{p}^D} J^\star_{\SDP}(\v{p}^D,\v{q}^D )$ and $\v{\lambda}^{q,\star} = \nabla_{\v{q}^D} J^\star_{\SDP} (\v{p}^D,\v{q}^D)$,
if $J^\star_{\SDP}$ is differentiable at $(\v{p}^D, \v{q}^D)$. 

In what follows, we analyze the relationship between AC-LMPs and SDP-LMPs. Note that we study pricing schemes in this paper that associate prices for both real and reactive powers. In part, our choice is motivated to analyze a generalization of SOCP-based DLMPs in \cite{PapavasiliouDLMPs} that does the same. We refer the reader to celebrated debates on reactive power pricing in \cite{Mount,Zhong,Lipka}. Here, we sidestep such debates and focus on the mathematical properties of these prices.


\begin{figure*}
	\begin{align}
	{\small
		\begin{aligned}
		&\Lcal_V(\v{p}^G, \v{q}^G, \v{V}, \v{\Lambda}^p, \v{\Lambda}^q, \v{\mu}, \ol{\v{\mu}}^v, \ul{\v{\mu}}^v) 
		\\
		&:= \sum_{k=1}^{n} c_k(p^G_k, q^G_k) -\sum_{k=1}^{n} \Lambda_k^p \left(p^G_k - p^D_k - \v{V}^{\Hsf}\v{\Phi}_k \v{V}\right)
		- \sum_{k=1}^{n} \Lambda_k^q \left(q^G_k - q^D_k - \v{V}^{\Hsf}\v{\Psi}_k \v{V}\right) 
		\\
		& + \sum_{k\ell=1}^{m} \mu_{k\ell} \left(\v{V}^{\Hsf}\v{\Phi}_{k\ell} \v{V}\ - f_{k\ell} \right)  
		+ \sum_{k=1}^{n} \ol{\mu}^{v}_k \left(\v{V}^{\Hsf} \bone_k \bone_k^{\Hsf} \v{V}- \ol{v}_k^2 \right) 
		- \sum_{k=1}^{n} \ul{\mu}^{v}_k \left(\v{V}^{\Hsf} \bone_k \bone_k^{\Hsf} \v{V} - \ul{v}_k^2 \right).
		\end{aligned}
	}
	\label{eq:partialLang.V}
	\end{align}
	\begin{align}
	{\small
		\begin{aligned}
		&\Lcal_W(\v{p}^G, \v{q}^G, \v{W}, \v{\lambda}^p, \v{\lambda}^q, \v{\mu}, \ol{\v{\mu}}^v, \ul{\v{\mu}}^v, \v{U}) 
		\\
		&:= \sum_{k=1}^{n} c_k(p^G_k, q^G_k) 
		- \sum_{k=1}^{n} \lambda_k^p \left(p^G_k - p^D_k - \trace(\v{\Phi}_k \v{W})\right)  
		- \sum_{k=1}^{n} \lambda_k^q \left(q^G_k - q^D_k - \trace(\v{\Psi}_k \v{W})\right) 
		\\
		& 
		+ \sum_{k=1}^{n} \mu_{k\ell} \left( \trace (\v{\Phi}_{k\ell} \v{W}) - f_{k\ell} \right)  
		+ \sum_{k=1}^{n} \ol{\mu}^{v}_k \left(\trace(\bone_k \bone_k^{\Hsf} \v{W}) - \ol{v}_k^2 \right)  
		- \sum_{k=1}^{n} \ul{\mu}^{v}_k \left(\trace(\bone_k \bone_k^{\Hsf} \v{W}) - \ul{v}_k^2 \right)
		- \trace(\v{UW}). 
		\end{aligned}
	}
	\label{eq:partialLang.W}
	\end{align}
	\hrule
\end{figure*}

\subsection{Relationship Between AC-LMPs and SDP-LMPs}
Define the partial Lagrangian functions for $\Pac$ and $\Psdp$ in \eqref{eq:partialLang.V} and \eqref{eq:partialLang.W}, respectively. Using $\Lcal_V$, $\Pac$ admits the standard min-max reformulation as
\begin{align}
\Pac : \underset{\substack{\v{V} \in \Cset^n, \\ \left(\v{p}^G, \v{q}^G\right)  \in \Gset}}{\inf} \ \ \underset{\substack{\v{\mu},  \ol{\v{\mu}}^v, \ul{\v{\mu}}^v \geq 0 \\
		\v{\lambda}^p, \v{\lambda}^q}}{\sup} \ \ \Lcal_V.
\label{Pac.minmax}
\end{align}
Then, the dual program of $\Pac$ is 
\begin{align}
\Dcal\Pac : \underset{\substack{\v{\mu},  \ol{\v{\mu}}^v, \ul{\v{\mu}}^v \geq 0 \\
		\v{\Lambda}^p, \v{\Lambda}^q}}{\sup} \ \
\underset{\substack{\v{V} \in \Cset^n, \\ \left(\v{p}^G, \v{q}^G\right)  \in \Gset}}{\inf} \ \ \Lcal_V.  
\label{eq:DPac}
\end{align}
Similarly, $\Psdp$ and its dual are given by 
\begin{align}
\Psdp &: \underset{\substack{\v{W} \in \Hset^n, \\ \left(\v{p}^G, \v{q}^G\right)  \in \Gset}}{\inf} \ \ \underset{\substack{\v{\mu},  \ol{\v{\mu}}^v, \ul{\v{\mu}}^v \geq 0 \\
		\v{U} \succeq 0, \v{\lambda}^p, \v{\lambda}^q}}{\sup} \ \ \Lcal_W,
\\
\Dcal\Psdp &: \underset{\substack{\v{\mu},  \ol{\v{\mu}}^v, \ul{\v{\mu}}^v \geq 0 \\
		\v{U} \succeq 0, \v{\lambda}^p, \v{\lambda}^q}}{\sup} \ \
\underset{\substack{\v{W} \in \Hset^n, \\ \left(\v{p}^G, \v{q}^G\right)  \in \Gset}}{\inf} \ \ \Lcal_W. 
\end{align}
Having defined these primal and dual problems, we now establish relationships between AC-LMPs and SDP-LMPs. Our exposition makes use of the following notations. For an arbitrary extended real-valued function $h:\Rset^r \to \Rset \cup \{ \pm \infty \}$, its epigraph is given by 
\begin{equation}
\epi h := \left\lbrace (\v{x},t)| \v{x} \in \dom h \subseteq \Rset^r, h(\v{x}) \leq t \right\rbrace.
\end{equation}
Here, $\dom h$ is the domain of $h$, over which $h$ assumes finite values. 
Also, for an arbitrary set  $\Mset$, let $\conv \Mset$ denote its convex hull--the smallest convex set that contains $\Mset$.

\begin{theorem}
	\label{thm:epigraph}
	The following assertions hold:
	\begin{enumerate}[label=(\alph*)]
		\item $\Dcal\Pac$ and $\Dcal\Psdp$ are equivalent optimization problems.
		\item $\conv \epi {J^\star_{\AC}(\v{p}^D, \v{q}^D)}  \subseteq \epi {J^\star_{\SDP}(\v{p}^D, \v{q}^D)}$.
		\item  When $\Pac$ has zero duality gap, i.e., $\Psdp$ admits a solution with $\rank \v{W}^\star = 1$, then SDP-LMPs are also AC-LMPs associated with a global optimum of $\Pac$.

	\end{enumerate} 
\end{theorem}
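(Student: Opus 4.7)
The plan is to prove the three parts sequentially, with part (a) doing the heavy technical work and parts (b), (c) following from standard duality and convexity arguments. For part (a), I would aggregate the $\v{V}$-dependent terms in $\Lcal_V$ into a single quadratic form $\v{V}^{\Hsf} \v{M} \v{V}$ with
\begin{align*}
\v{M} := \sum_{k=1}^n \Lambda_k^p \v{\Phi}_k + \sum_{k=1}^n \Lambda_k^q \v{\Psi}_k + \sum_{k\ell=1}^m \mu_{k\ell} \v{\Phi}_{k\ell} + \sum_{k=1}^n \bigl(\ol{\mu}_k^v - \ul{\mu}_k^v\bigr) \bone_k \bone_k^{\Hsf}.
\end{align*}
Then $\inf_{\v{V} \in \Cset^n} \v{V}^{\Hsf} \v{M} \v{V}$ equals $0$ if $\v{M} \succeq 0$ and $-\infty$ otherwise. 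The analogous grouping in $\Lcal_W$ collects the $\v{W}$-dependent contributions into $\trace((\v{M} - \v{U}) \v{W})$; its unconstrained infimum over $\v{W} \in \Hset^n$ vanishes when $\v{M} = \v{U}$ and equals $-\infty$ otherwise, which, together with the dual feasibility $\v{U} \succeq 0$, again enforces $\v{M} \succeq 0$. The residual Lagrangian terms involving $(\v{p}^G, \v{q}^G)$ and the fixed data are identical after the identification $\v{\Lambda} \leftrightarrow \v{\lambda}$, so $\Dcal\Pac$ and $\Dcal\Psdp$ reduce to supremizing the same objective over the same feasible set.

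Part (b) follows from two observations. Any feasible $\v{V}$ of $\Pac$ produces a rank-one $\v{W} = \v{V}\v{V}^{\Hsf}$ feasible for $\Psdp$ at identical cost, giving $J^\star_{\SDP} \leq J^\star_{\AC}$ pointwise and hence $\epi J^\star_{\AC} \subseteq \epi J^\star_{\SDP}$. Moreover, $\v{p}^D, \v{q}^D$ enter $\Psdp$ only as right-hand sides of linear equalities in a convex program, making $J^\star_{\SDP}$ jointly convex in $(\v{p}^D, \v{q}^D)$; therefore $\epi J^\star_{\SDP}$ is a convex set containing $\epi J^\star_{\AC}$, and so contains $\conv \epi J^\star_{\AC}$. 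For part (c), let $(\v{W}^\star, \v{p}^{G,\star}, \v{q}^{G,\star})$ be optimal for $\Psdp$ with $\v{W}^\star = \v{V}^\star (\v{V}^\star)^{\Hsf}$ rank one. Then $(\v{V}^\star, \v{p}^{G,\star}, \v{q}^{G,\star})$ is feasible for $\Pac$ at cost $J^\star_{\SDP}$, which equals $J^\star_{\AC}$ by part (b), exhibiting $\v{V}^\star$ as a global optimum of $\Pac$. By part (a), any optimal dual solution $(\v{\lambda}^{p,\star}, \v{\lambda}^{q,\star}, \v{\mu}^\star, \ol{\v{\mu}}^{v,\star}, \ul{\v{\mu}}^{v,\star})$ of $\Dcal\Psdp$ is also optimal for $\Dcal\Pac$ after relabeling $\v{\Lambda}^{p,\star} := \v{\lambda}^{p,\star}$ and $\v{\Lambda}^{q,\star} := \v{\lambda}^{q,\star}$. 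Since primal and dual optima of $\Pac$ now attain the same value, complementary slackness holds and the primal optimum minimizes $\Lcal_V$ in its primal arguments at the dual optimum, which jointly yield the full KKT system of Figure~\ref{fig:KKT.AC} at $\v{V}^\star$; the bound multipliers for $\Gset$ arise from KKT of the convex inner minimization over the box. Hence $\v{\lambda}^{p,\star}, \v{\lambda}^{q,\star}$ are valid AC-LMPs at the global optimum.

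The main technical hurdle is the careful bookkeeping in part (a) -- aggregating the many $\v{V}$-dependent Hermitian summands in $\Lcal_V$ into a single matrix $\v{M}$ and matching them against the $\v{W}$-dependent terms in $\Lcal_W$ modulated by the semidefinite multiplier $\v{U}$. The rest of the proof amounts to invoking standard SDP strong duality (available since $\Pac$ is assumed strictly feasible and hence so is $\Psdp$), joint convexity of the relaxation's value function under parametric right-hand sides, and the classical fact that a primal-dual optimal pair with zero duality gap satisfies the KKT conditions.
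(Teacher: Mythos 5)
Your proposal is correct and follows essentially the same route as the paper: in part (a) your matrix $\v{M}$ is exactly the paper's $\widehat{\v{U}}$, matched against the stationarity of $\Lcal_W$ in $\v{W}$ with the dual multiplier $\v{U} \succeq 0$, and parts (b), (c) proceed as in the paper via convexity of $J^\star_{\SDP}$ and transfer of the optimal dual multipliers through the rank-one decomposition $\v{W}^\star = \v{V}^\star \v{V}^{\Hsf,\star}$. The only (harmless) deviation is in part (b), where you obtain $J^\star_{\SDP} \leq J^\star_{\AC}$ directly from the relaxation map $\v{V} \mapsto \v{V}\v{V}^{\Hsf}$, while the paper derives it from weak duality combined with part (a) and strong duality of $\Psdp$; your version is more elementary and avoids invoking Slater's condition at that step.
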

\begin{proof} For part (a), we write $\Lcal_V$ as $\v{V}^{\Hsf} \widehat{\v{U}} \v{V} + \zeta$, where
	\begin{align} 
	{\small
		\begin{aligned}
		&\widehat{\v{U}}
		:= \sum_{k=1}^{n} \left[ \Lambda_k^p \v{\Phi}_k 
		+ \Lambda_k^q \v{\Psi}_k 
		+ \left(\ol{\mu}^{v}_k - \ul{\mu}^{v}_k \right)\bone_k \bone_k^{\Hsf} \right]
		+ \sum_{k\ell=1}^{m} \mu_{k\ell} \v{\Phi}_{k\ell},
		\end{aligned}
		\label{eq:Uhat.def}
	}
	\end{align}
	\begin{align}
	{\small
		\begin{aligned}
		\zeta(\v{p}^G, \v{q}^G,  \v{\Lambda}^p, \v{\Lambda}^q, \v{\mu}, \ol{\v{\mu}}^v, \ul{\v{\mu}}^v) 
		& := \sum_{k=1}^{n} \left[c_k(p^G_k, q^G_k) -  \Lambda_k^p \left( p^G_k - p^D_k \right) - \Lambda_k^q \left( q^G_k - q^D_k \right) \right]
		\\ 
		& \qquad  - \sum_{k=1}^{n} \left(\ol{\mu}^{v}_k  \ol{v}_k^2 - \ul{\mu}^{v}_k \ul{v}_k^2\right)
		- \sum_{k\ell=1}^{m} \mu_{k\ell}  f_{k\ell}.
		\end{aligned}
	}
	\label{eq:zeta.def}
	\end{align}
	Then, we have
	\begin{align}
	\underset{\v{V} \in \Cset^n}{\inf} \  \Lcal_V 
	=
	\begin{cases}
	\zeta, & \text{if } \widehat{\v{U}} \succeq 0,
	\\
	- \infty, & \text{otherwise}.
	\end{cases}
	\end{align}
	and $\Dcal \Pac$ becomes 
	\begin{alignat}{2}
	{\small
		\begin{aligned}
		\underset{\substack{\v{\mu},  \ol{\v{\mu}}^v, \ul{\v{\mu}}^v \geq 0 \\
				\widehat{\v{U}}, \v{\Lambda}^p, \v{\Lambda}^q}}{\sup}
		\left\{ \underset{\substack{\left(\v{p}^G, \v{q}^G\right)  \in \Gset}}{\inf}  \ \
		\zeta \right\}, 
		\
		\text{subject to} \ \ \eqref{eq:Uhat.def}, \ 
		\widehat{\v{U}} \succeq 0.
		\end{aligned}
	}
	\label{eq:DPac.2}
	\end{alignat}
	To show the equivalence of the above problem with $\Dcal\Psdp$, note that $\Lcal_W$ is linear in $\v{W}$ and thus, unconstrained minimization of $\Lcal_W$ over $\v{W} \in \Hset^n$ yields $-\infty$, unless $\nabla_\v{W} \Lcal_W = 0$. Setting that derivative to zero, we recover \eqref{eq:Uhat.def} with $\v{U}$ instead of $\widehat{\v{U}}$ and $\v{\Lambda}$'s replaced by $\v{\lambda}$'s. Incorporating this as a constraint in $\Dcal\Psdp$ yields \eqref{eq:DPac.2} with  $\v{U}$ instead of $\widehat{\v{U}}$ and $\v{\Lambda}$'s replaced by $\v{\lambda}$'s. This completes the proof of the dual equivalence of $\Pac$ and $\Psdp$.

	For part (b), we appeal to weak duality and conclude that $J_\AC^\star$, the optimal value of $\Pac$, dominates the optimal value of $\Dcal\Pac$. From part (a), the latter equals the optimal value of $\Dcal\Psdp$, which equals $J_\SDP^\star$, because strong duality holds for $\Psdp$. Strong duality follows from Slater's condition (see \cite[Theorem 2.165]{bonnans2013perturbation}) that applies under our hypothesis that $\Pac$ admits a strictly feasible point. Thus, we have
	\begin{align}
	J_\SDP^\star(\v{p}^D, \v{q}^D) \leq J_\AC^\star(\v{p}^D, \v{q}^D).
	\label{eq:Jsdp.less.Jac}
	\end{align}
	Since $J^\star_\SDP$ is convex in its arguments, its epigraph is a convex set. Thus, \eqref{eq:Jsdp.less.Jac} implies $ \epi{J_\AC^\star} \subseteq \epi{J_\SDP^\star} $. The rest follows from the fact that if a convex set contains a nonconvex set, then the former contains the convex hull of the latter. Part (c) is a consequence of the fact that if $\rank \v{W}^\star = 1$, then its spectral decomposition  $\v{W}^\star = \v{V}^\star \v{V}^{\Hsf, \star}$ yields $\v{V}^\star$ that together with $\v{\lambda}^\star$, $\v{\mu}^\star$, $\v{\ul{\mu}}^{v,\star}$ and $\v{\ol{\mu}}^{v,\star}$ satisfies the KKT system for $\Pac$.
\end{proof}

We now contextualize Theorem \ref{thm:epigraph} within existing literature. The dual equivalence in part (a) between $\Pac$ and $\Psdp$ has been reported before, e.g., see \cite{lavaei2012zero, wolkowicz2012handbook}. We include it for completeness and now contrast its implications  with a similar result known for convex hull pricing (CHP). In \cite{Gribik,Schiro}, CHP tackles the nonconvexity introduced by integer unit commitment decisions with linearized power flow equations. CHPs are derived from the convex Lagrangian dual problem of the unit commitment problem--a property that part (a) suggests for our context, where  SDP-LMPs are derived from a problem equivalent to the Lagrangian dual problem of $\Pac$. In this respect, SDP-LMPs and CHPs are indeed similar.

The aforementioned similarity between SDP-LMPs and CHPs might suggest that SDP-LMPs will inherit other properties of CHP. For example, the analysis in \cite{Gribik} might indicate that the convex hull of the epigraph of $J^\star_{\AC}(\v{p}^D, \v{q}^D)$  would equal the epigraph of $J^\star_{\SDP}(\v{p}^D, \v{q}^D)$. In fact, CHP derives its name from the relation between the convex hull of the epigraph of the nonconvex problem and its convex Lagrangian dual. However, part (b) only proves an inclusion instead of an equality. We now argue why the analysis in \cite{Gribik} does \emph{not} carry over to our setting.

For an extended real-valued function $h:\Rset^r \to \Rset \cup \{ \pm \infty \}$, define its \emph{Fenchel conjugate} as 
\begin{align}
h^{\fc}(\v{\xi}) := \underset{\v{x} \in \dom h}{\sup} \left\{ \v{\xi}^\T \v{x} - h(\v{x}) \right\} \in \Rset \cup \{ \pm \infty \}.
\end{align}
Extending the definition, one can also define the Fenchel \emph{biconjugate} of $h$ as $h^{\fc\fc}$. Per the Fenchel-Moreau-Rockafellar theorem in \cite[Theorem 2.113]{bonnans2013perturbation}, we have
\begin{align}
\conv \epi h  = \epi h^{\fc\fc}
\label{eq:epi.conv.cc}
\end{align}
for a continuous function $h$. This characterization of the convex hull of the epigraph of a function proves useful to analyze the epigraphs of $J^\star_\AC$ and $J^\star_\SDP$. Specifically, consider the global optimal cost of $\Pac$, parameterized as $J_\AC^\star(\v{p}^D, \v{q}^D, \v{f}, -\ul{\v{v}}^2, \ol{\v{v}}^2)$. Define the same for $J_\SDP^\star$. 
\begin{proposition} 
	\label{lemma:biconj}
	The parametric optimal costs satisfy 
	\begin{align}
	\hspace{-0.1in}
	{\small
		J_{\AC}^{\star,\fc\fc} (\v{p}^D, \v{q}^D, \v{f}, -\ul{\v{v}}^2, \ol{\v{v}}^2) = J_{\SDP}^{\star,\fc\fc} (\v{p}^D, \v{q}^D, \v{f}, -\ul{\v{v}}^2, \ol{\v{v}}^2).
	}
	\end{align}
\end{proposition}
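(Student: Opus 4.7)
The plan is to identify the Fenchel biconjugate of each parametric optimal value function with the optimal value of the corresponding Lagrangian dual problem (also parameterized by the same right-hand-side quantities), and then to invoke Theorem~\ref{thm:epigraph}(a) to conclude that the two biconjugates coincide.

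Write $\theta := (\v{p}^D, \v{q}^D, \v{f}, -\ul{\v{v}}^2, \ol{\v{v}}^2)$. First, I would compute the Fenchel conjugate $J_{\AC}^{\star,\fc}(\v{\xi})$ by expanding the definition and swapping the outer supremum over $\theta$ with the infimum that defines $J_{\AC}^\star$. For any fixed feasible triple $(\v{p}^G, \v{q}^G, \v{V}) \in \Gset \times \Cset^n$, the parameters $\v{p}^D, \v{q}^D$ are pinned by the equality balance constraints \eqref{eq:pBalance}--\eqref{eq:qBalance}, whereas $\v{f}$, $-\ul{\v{v}}^2$, and $\ol{\v{v}}^2$ are free to grow in one direction. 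Consequently the inner supremum over $\theta$ is $+\infty$ unless the $\v{\xi}$-components corresponding to those inequality right-hand-sides have the right sign. Over this cone, a short calculation identifies $J_{\AC}^{\star,\fc}(\v{\xi})$ with $-\inf_{\v{V}, (\v{p}^G, \v{q}^G) \in \Gset}\Lcal_V\vert_{\theta = 0}$, where the Lagrange multipliers inside $\Lcal_V$ are read off from $\v{\xi}$ with the appropriate sign conventions (so that $\v{\mu}, \ul{\v{\mu}}^v, \ol{\v{\mu}}^v$ end up non-negative). Taking one more conjugate then produces precisely the linear-in-$\theta$ term that matches the parameter-dependent part of $\Lcal_V$, yielding $J_{\AC}^{\star,\fc\fc}(\theta)$ equal to the optimal value of $\Dcal\Pac$ at $\theta$.

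An identical calculation gives $J_{\SDP}^{\star,\fc\fc}(\theta)$ equal to the optimal value of $\Dcal\Psdp$ at $\theta$. Theorem~\ref{thm:epigraph}(a) asserts that $\Dcal\Pac$ and $\Dcal\Psdp$ are the very same optimization problem, and the parameters $\theta$ enter their objectives identically through the linear term $\zeta$ in \eqref{eq:zeta.def}; hence their parametric optimal values coincide as functions of $\theta$, and the claim follows.

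The main obstacle is the careful sign bookkeeping in identifying $J_{\AC}^{\star,\fc}$ with the negated dual function of $\Pac$. The particular ordering in the proposition, with $-\ul{\v{v}}^2$ rather than $\ul{\v{v}}^2$, is chosen precisely so that every parametrized constraint of $\Pac$ takes the form ``quadratic-in-$\v{V}$ $\le$ parameter''. This uniformity forces the conjugate variables into a single consistent orthant and matches the non-negativity required of $\v{\mu}, \ul{\v{\mu}}^v, \ol{\v{\mu}}^v$ in the dual feasibility conditions of Figure~\ref{fig:KKT.AC}. Verifying this alignment, and ruling out pathologies where the inner infimum over primal variables is $-\infty$, is where most of the technical effort will concentrate.
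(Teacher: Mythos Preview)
Your proposal is correct and follows essentially the same route as the paper: compute the Fenchel conjugate of $J_{\AC}^\star$ to recognize it as the negated dual function, take one more conjugate to land on the optimal value of $\Dcal\Pac$, and then invoke Theorem~\ref{thm:epigraph}(a). The only cosmetic difference is that the paper, rather than repeating the calculation for $J_{\SDP}^\star$, uses strong duality of $\Psdp$ to write $J_{\AC}^{\star,\fc\fc} = J_{\SDP}^\star$ and then appeals to convexity and continuity of $J_{\SDP}^\star$ (so that $J_{\SDP}^\star = J_{\SDP}^{\star,\fc\fc}$) to close the argument.
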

\begin{proof}
	Define the support function of set $\Aset$ as
	\begin{align}
	\delta_\Aset(a) := 
	\begin{cases}
	0, & \text{if } a \in \Aset, \\
	+\infty, & \text{otherwise}.
	\end{cases}
	\label{eq:supp.f}
	\end{align}
	Using this notation, $\Pac$ becomes
	\begin{align}
	{\small
		\begin{aligned}
		J^\star_\AC(\v{p}^D, \v{q}^D, \v{f}, -\ul{\v{v}}^2, \ol{\v{v}}^2) 
		& = \underset{ \substack{ \left( \v{p}^G, \v{q}^G \right) \in \Gset \\ \v{V} \in \Cset^n }}{ \inf } 
		\left \lbrace \sum_{k=1}^n c_k(p_k^G, q_k^G) + \sum_{k=1}^n  \delta_{\{0\}} \left( p_k^D - p_k^G + \v{V}^{\Hsf} \Phi_k \v{V} \right) \right.  
		\\
		& \qquad + \sum_{k=1}^n \delta_{\{0\}} \left(q_k^D - q_k^G + \v{V}^{\Hsf} \Psi_k \v{V} \right) + \sum_{k\ell=1}^{m} \delta_{\Rset_+} \left(f_{k\ell} - \v{V}^{\Hsf}\v{\Phi}_{k\ell} \v{V}\right)  
		\\
		& \qquad \left. + \sum_{k=1}^{n} \delta_{\Rset_+} \left(\ol{v}_k^2 - \v{V}^{\Hsf} \bone_k \bone_k^{\Hsf} \v{V} \right) 
		+ \sum_{k=1}^{n} \delta_{\Rset_+} \left(\v{V}^{\Hsf} \bone_k \bone_k^{\Hsf} \v{V} - \ul{v}_k^2 \right)\right\rbrace.
		\end{aligned}
	}
	\end{align}
	Applying Lemma \ref{lemma:param.fenchel} in the appendix, together with the relations $\delta_{\{0\}}^{\fc}(a) = 0$ and $\delta_{\Rset_+}^{\fc}(a) = \delta_{\Rset_+}(-a)$, we get
	\begin{align}
	\hspace{-0.07in}
	{\small
		\begin{aligned}
		J^{\star,\fc}_\AC(\v{\Lambda}^p, \v{\Lambda}^q, -\v{\mu}, -\v{\ul{\mu}}^v, -\v{\ol{\mu}}^v)
		&= \delta_{\Rset_+}(\mu_{k \ell})
		+ \delta_{\Rset_+}(\ol{\mu}^v_k)  
		+ \delta_{\Rset_+}(\ul{\mu}^v_k)
		\\
		& \quad - \underset{ \substack{ \left( \v{p}^G, \v{q}^G \right) \in \Gset \\ \v{V} \in \Cset^n }}{ \textrm{inf} } 
		\left \lbrace \sum_{k=1}^n c_k(p_k^G, q_k^G) + \sum_{k=1}^n  \Lambda^p_k \left( - p_k^G + \v{V}^{\Hsf} \Phi_k \v{V} \right) \right. 
		\\
		& \qquad \qquad + \sum_{k=1}^n \Lambda^q_k \left( - q_k^G + \v{V}^{\Hsf} \Psi_k \v{V} \right) + \sum_{k\ell=1}^{m} \mu_{k \ell} \left( \v{V}^{\Hsf}\v{\Phi}_{k\ell} \v{V}\right) 
		\\
		& \qquad \qquad \left. + \sum_{k=1}^{n} \left(\ol{\mu}^v_k - \ul{\mu}^v_k\right) \v{V}^{\Hsf} \bone_k \bone_k^{\Hsf} \v{V}
		\right\rbrace.
		\end{aligned}
	}
	\end{align}
	Using the definition of $\Lcal_V$ in \eqref{eq:partialLang.V}, the above equation yields
	\begin{align}
	\begin{aligned}
	J^{\star,\fc}_\AC(\v{\Lambda}^p, \v{\Lambda}^q, -\v{\mu}, -\v{\ul{\mu}}^v, -\v{\ol{\mu}}^v)
	&= 	\delta_{\Rset_+}(\mu_{k \ell})
	+ \delta_{\Rset_+}(\ol{\mu}^v_k)  
	+ \delta_{\Rset_+}(\ul{\mu}^v_k)
	\\
	& \quad + \sum_{k=1}^n \left[ {\Lambda}^p_k \v{p}^D_k + {\Lambda}^q_k \v{q}^D_k  - \ol{\mu}_k^v \ol{v}_k^2 + \ul{\mu}_k^v \ul{v}_k^2 \right]
	- \sum_{k \ell = 1}^m \mu_{k \ell} f_{k\ell} 
	\\
	& \quad - \underset{ \substack{ \left( \v{p}^G, \v{q}^G \right) \in \Gset \\ \v{V} \in \Cset^n }}{ \inf }  \ \ \Lcal_V(\v{p}^G, \v{q}^G, \v{V}, \v{\Lambda}^p, \v{\Lambda}^q, \v{\mu}, \ol{\v{\mu}}^v, \ul{\v{\mu}}^v).
	\end{aligned}
	\end{align}
	Thus, its biconjugate is given by
	\begin{align}
	{\small
		\begin{aligned}
		J^{\star,\fc \fc}_\AC(\v{p}^D, \v{q}^D, \v{f}, -\ul{\v{v}}^2, \ol{\v{v}}^2)
		= \underset{\substack{\v{\mu},  \ol{\v{\mu}}^v, \ul{\v{\mu}}^v \geq 0 \\
				\v{\Lambda}^p, \v{\Lambda}^q}}{\sup} \ \
		\underset{\substack{\v{V} \in \Cset^n, \\ \left(\v{p}^G, \v{q}^G\right)  \in \Gset}}{\inf} \ \ \Lcal_V.
		\end{aligned}
	}
	\end{align}
	The RHS of the above equation is the optimal cost of $\Dcal\Pcal_{\AC}$ in \eqref{eq:DPac}. By virtue of Theorem \ref{thm:epigraph}(a), this cost coincides with the optimal cost of $\Dcal\Pcal_\SDP$. Strong duality of $\Psdp$ then gives
	\begin{align}
	\hspace{-0.1in}
	{\small
		J^{\star,\fc \fc}_\AC(\v{p}^D, \v{q}^D, \v{f}, -\ul{\v{v}}^2, \ol{\v{v}}^2) = J^{\star}_\SDP(\v{p}^D, \v{q}^D, \v{f}, -\ul{\v{v}}^2, \ol{\v{v}}^2). 
	}
	\end{align}
	Recall that $J^{\star}_\SDP$ is convex and continuous. Hence, $J^{\star}_\SDP = J^{\star, \fc\fc}_\SDP$, per \cite[Theorem 2.113]{bonnans2013perturbation}, completing the proof.
\end{proof}
The parametric optimal dual cost is known to provide the tightest convex lower bound on the parametric optimal primal cost of a nonconvex program that is linearly parameterized on the right-hand side. Thus, Proposition \ref{lemma:biconj} is not surprising in light of Theorem \ref{thm:epigraph}(a) that establishes the equivalence between the dual problems of $\Pac$ and $\Psdp$.  
Combining this result with \eqref{eq:epi.conv.cc}, the convex hull of the epigraph of $J_\AC^\star$ indeed equals the epigraph of $J_\SDP^\star$, but only when viewed as a function of \emph{all} parameters listed in Proposition \ref{lemma:biconj}, i.e.,
\begin{align}
\begin{aligned}
&\conv \epi J_{\AC}^{\star,\fc\fc} (\v{p}^D, \v{q}^D, \v{f}, -\ul{\v{v}}^2, \ol{\v{v}}^2) 
\\
& \quad = \epi J_{\SDP}^{\star,\fc\fc} (\v{p}^D, \v{q}^D, \v{f}, -\ul{\v{v}}^2, \ol{\v{v}}^2).
\end{aligned}
\end{align}
Fixing a subset of these parameters 
amounts to taking a \emph{slice} of these sets. Convex hull of the slice of a nonconvex set may \emph{not} always equal the slice of its convex hull (see Figure \ref{fig:slice})--a relation required to claim equality between  $\conv \epi J_{\AC}^{\star,\fc\fc} (\v{p}^D, \v{q}^D)$ and $\epi J_{\SDP}^{\star,\fc\fc} (\v{p}^D, \v{q}^D)$. 
\begin{figure}[ht]
	\centering
	\includegraphics[width=0.55\textwidth]{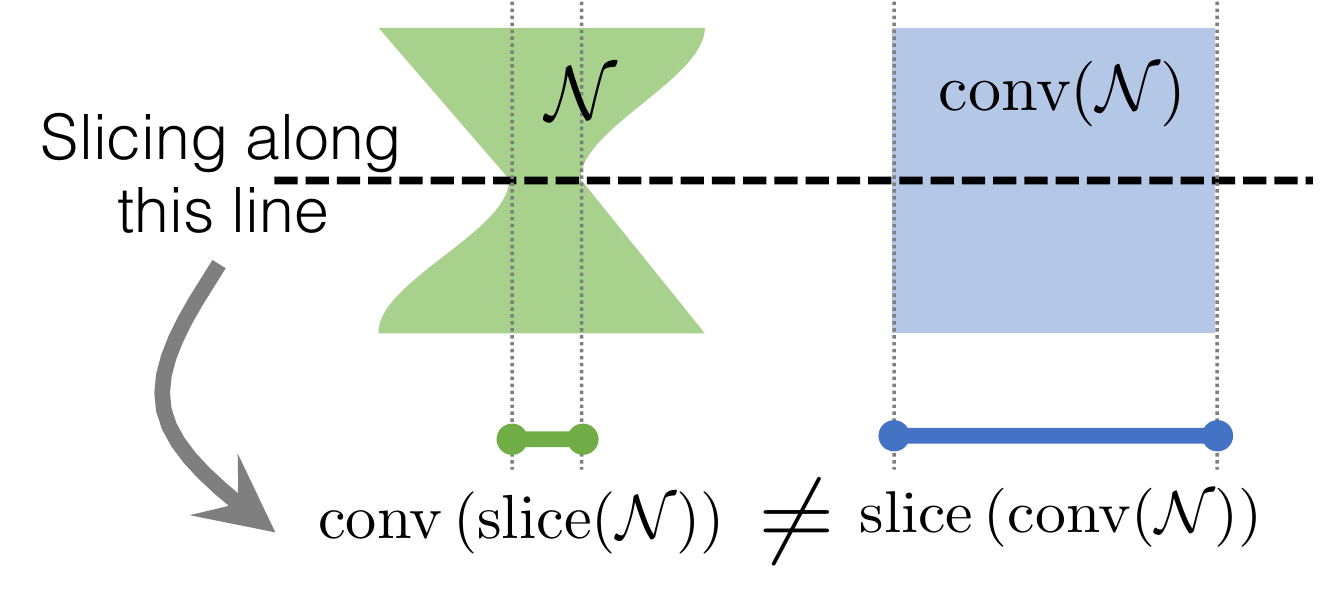}
	\caption{Figure illustrating that slices may not preserve the relation between a nonconvex set and its convex hull.}
	\label{fig:slice}
\end{figure}


\section{Revenue Adequacy of the Pricing Mechanisms}
\label{sec:ra}
In this section and the next, we study properties of these prices that are relevant to electricity market operations. We say a market mechanism is \emph{revenue adequate}, if the rents collected from power sales are enough to cover the rents payable to suppliers. Revenue adequacy ensures that the SO never runs cash negative after settling the payments of market participants.
To present our results on revenue adequacy, we first define the settlements of market participants with AC-LMPs and SDP-LMPs. 

Consider a local optimal dispatch of $\Pac$, given by $(p_k^{G,\star}, q_k^{G,\star}, \v{V}^\star )$. With AC-LMPs $\v{\Lambda}^{p,\star}$, $\v{\Lambda}^{q,\star}$ associated with that  dispatch, the controllable asset at bus $k$ is paid
\begin{equation}
\pi_k^G :=  \Lambda_{k}^{p,\star} p_k^{G,\star} + \Lambda_{k}^{q,\star}  q_k^{G,\star} \label{eq:pi.cont}
\end{equation}
by the SO. Similarly, uncontrollable asset with its demand $p_k^D$ and $q_k^D$ pays to the SO,
\begin{equation}\pi_k^D :=  \Lambda_{k}^{p,\star}  p_{k}^{D} + \Lambda_{k}^{q,\star}  q_k^{D}. 
\label{eq:pi.uncont}
\end{equation}
For payments based on SDP-LMPs, replace $\Lambda_{k}^{p,\star}, \Lambda_{k}^{q,\star}$ in \eqref{eq:pi.cont}-\eqref{eq:pi.uncont}  with $\lambda_{k}^{p,\star}, \lambda_{k}^{q,\star}$, respectively. Note that with SDP-LMPs, we consider payments defined using prices obtained from the SDP relaxation, but use these prices together with a locally optimal dispatch of $\Pac$ to calculate payments. 
With these payments, a pricing mechanism is revenue adequate if the merchandising surplus (MS) is nonnegative, i.e., if
\begin{equation}
\textrm{MS} := \sum_{k=1}^{n} \left(\pi_k^G - \pi_k^D \right) \geq 0.
\label{eq:MS.def}
\end{equation}

\begin{theorem}
	\label{thm:2}
	If voltage lower limits are non-binding at all buses at a local optimal solution of $\Pac$, i.e., 
	$|V_k^\star| > \ul{v}_k$
	for $k=1,\ldots,n$, then AC-LMPs define a revenue adequate pricing mechanism. 
\end{theorem}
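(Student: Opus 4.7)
The plan is to compute MS directly from the KKT conditions for $\Pac$ listed in Figure \ref{fig:KKT.AC}, exploiting the quadratic structure of the power flow expressions. The key identity I will derive is a closed-form expression for MS as a signed sum of the congestion rents $\mu_{k\ell}^\star f_{k\ell}$, the upper-voltage rents $\ol{\mu}_k^{v,\star} \ol{v}_k^2$, and the lower-voltage rents $\ul{\mu}_k^{v,\star} \ul{v}_k^2$. The hypothesis on non-binding lower voltage limits enters only to eliminate the single term with the ``wrong'' sign.

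First I would expand
\begin{align*}
\textrm{MS} = \sum_{k=1}^n \Lambda_k^{p,\star}(p_k^{G,\star} - p_k^D) + \sum_{k=1}^n \Lambda_k^{q,\star}(q_k^{G,\star} - q_k^D),
\end{align*}
and use primal feasibility \eqref{eq:pBalance}--\eqref{eq:qBalance} to rewrite each injection as the quadratic form $\v{V}^{\Hsf,\star} \v{\Phi}_k \v{V}^\star$ or $\v{V}^{\Hsf,\star} \v{\Psi}_k \v{V}^\star$. This expresses MS purely in terms of $\v{V}^\star$ and the real/reactive price multipliers.

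Next, I would premultiply the stationarity condition \eqref{eq:grad.2} by $\v{V}^{\Hsf,\star}$ to obtain the scalar identity
\begin{align*}
\sum_k \Lambda_k^{p,\star} \v{V}^{\Hsf,\star} \v{\Phi}_k \v{V}^\star + \sum_k \Lambda_k^{q,\star} \v{V}^{\Hsf,\star} \v{\Psi}_k \v{V}^\star + \sum_{k\ell} \mu_{k\ell}^\star \v{V}^{\Hsf,\star} \v{\Phi}_{k\ell} \v{V}^\star + \sum_k \bigl(\ol{\mu}^{v,\star}_k - \ul{\mu}^{v,\star}_k\bigr) |V_k^\star|^2 = 0.
\end{align*}
Combining this with the expression for MS from the previous step lets me solve for MS in terms of the remaining three sums. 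Applying complementary slackness \eqref{eq:CS.1} then replaces each quadratic form by its bound: $\v{V}^{\Hsf,\star} \v{\Phi}_{k\ell} \v{V}^\star$ becomes $f_{k\ell}$ wherever $\mu_{k\ell}^\star > 0$, and similarly for the voltage limits. This produces
\begin{align*}
\textrm{MS} = -\sum_{k\ell=1}^m \mu_{k\ell}^\star f_{k\ell} - \sum_{k=1}^n \ol{\mu}^{v,\star}_k \ol{v}_k^2 + \sum_{k=1}^n \ul{\mu}^{v,\star}_k \ul{v}_k^2.
\end{align*}

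The main (and only) obstacle is the last sum, which carries a sign that could violate the revenue-adequacy inequality: it is the single term where the KKT multiplier structure works against us. This is precisely where the hypothesis $|V_k^\star| > \ul{v}_k$ is invoked: by complementary slackness \eqref{eq:CS.1}, a strictly slack lower voltage constraint forces $\ul{\mu}^{v,\star}_k = 0$ for every $k$, so the problematic term vanishes. What remains is a sum of nonpositive terms, yielding the desired sign for MS (equivalently, that demand collections $\sum_k \pi_k^D$ dominate supplier payments $\sum_k \pi_k^G$), which establishes revenue adequacy. No further regularity is needed beyond the existence of the KKT multipliers assumed in the definition of AC-LMPs.
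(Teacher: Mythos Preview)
Your proposal is correct and follows essentially the same approach as the paper: expand MS using primal feasibility \eqref{eq:pBalance}--\eqref{eq:qBalance}, contract the stationarity condition \eqref{eq:grad.2} with $\v{V}^{\Hsf,\star}$, and apply complementary slackness \eqref{eq:CS.1} to reduce MS to a signed combination of $\mu_{k\ell}^\star f_{k\ell}$, $\ol{\mu}^{v,\star}_k \ol{v}_k^2$, and $\ul{\mu}^{v,\star}_k \ul{v}_k^2$, after which the hypothesis kills the lone term of the wrong sign. One caveat: your expansion $\textrm{MS}=\sum_k\Lambda_k^{p,\star}(p_k^{G,\star}-p_k^D)+\cdots$ follows the paper's stated definition \eqref{eq:MS.def} literally, but the paper's own proof actually uses $\textrm{MS}=\sum_k\Lambda_k^{p,\star}(p_k^{D}-p_k^{G,\star})+\cdots$ (the definition has a sign typo), so your final formula is the negative of the paper's and you conclude $\textrm{MS}\leq 0$ rather than $\textrm{MS}\geq 0$---but as you note, the substantive conclusion $\sum_k\pi_k^D\geq\sum_k\pi_k^G$ is identical.
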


\begin{proof}
	Expanding MS in \eqref{eq:MS.def}, we get
	\begin{subequations}
		\begin{align}
		\textrm{MS} 
		&= \sum_{k=1}^{n} \Lambda^{p,\star}_k \left( p_k^D - p_k^{G,\star}\right) + \sum_{k=1}^{n} \Lambda^{q,\star}_k \left( q_k^D - q_k^{G,\star}\right) 
		\notag
		\\
		&= - \sum_{k=1}^{n}  \Lambda^{p,\star}_k \v{V}^{\Hsf,\star} \v{\Phi}_k \v{V}^\star 
		- \Lambda^{q,\star}_k \v{V}^{\Hsf,\star} \v{\Psi}_k \v{V}^\star 
		\label{eq:MS.1}
		\\
		&=  \sum_{k \ell = 1}^m \mu_{k \ell}^\star  \v{V}^{\Hsf, \star} \v{\Phi}_{k \ell} \v{V}^\star 
		+  \sum_{k=1}^n  \ol{\mu}^{v,\star}_k \v{V}^{\Hsf,\star}  \bone_k \bone_k^\mathsf{T} \v{V^\star} 
		-  \sum_{k=1}^n  \ul{\mu}^{v,\star}_k   \v{V}^{\Hsf,\star} \bone_k \bone_k^\mathsf{T} \v{V^\star}
		\label{eq:MS.2}
		\\
		& = \sum_{k \ell = 1}^m \mu_{k \ell}^\star f_{k\ell}
		+ \sum_{k=1}^n \ol{\mu}^{v,\star}_k \ol{v}_k^2 - \sum_{k=1}^n \ul{\mu}^{v,\star}_k \ul{v}_k^2.
		\label{eq:MS.3}
		\end{align}
		\label{eq:MS.int}
	\end{subequations}
	Here, \eqref{eq:MS.1} follows from primal feasibility condition, \eqref{eq:MS.2} follows from \eqref{eq:grad.2}, and \eqref{eq:MS.3} follows from \eqref{eq:CS.1}. If the lower voltage limits are non-binding at all buses at an optimal solution, then \eqref{eq:CS.2} further yields
	$	\ul{\mu}^{v,\star}_k = 0 $
	for each $k$. Then, \eqref{eq:MS.int} implies
	\begin{align*}
	\textrm{MS} 
	= \sum_{k \ell = 1}^m \mu_{k \ell}^\star f_{k\ell}
	+ \sum_{k=1}^n \ol{\mu}^{v,\star}_k \ol{v}_k^2
	\geq 0.
	\end{align*}
	The inequality follows from the nonnegativity of each term in each summand, completing the proof.
\end{proof}

Theorem \ref{thm:2} asserts that payments from uncontrollable assets cover the rents payable to controllable assets, provided lower bounds for voltage constraints do not bind at any bus. This requirement is only \emph{sufficient} for revenue adequacy. In Section \ref{sec:3bus}, we provide an example where MS $>0$ even when the condition is violated, proving that it is not necessary.

\begin{corollary}
	If the voltage lower limits are non-binding at a global optimum of $\Pac$ with a zero duality gap, then SDP-LMPs, together with that global optimum of $\Pac$, define a revenue adequate mechanism.
\end{corollary}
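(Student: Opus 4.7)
The plan is to derive this corollary as a direct consequence of two earlier results: Theorem \ref{thm:epigraph}(c), which equates SDP-LMPs with AC-LMPs at a global optimum under zero duality gap, and Theorem \ref{thm:2}, which establishes revenue adequacy of AC-LMPs when voltage lower limits are non-binding. Essentially, the corollary is a transport of Theorem \ref{thm:2} from AC-LMPs to SDP-LMPs via the identification provided by Theorem \ref{thm:epigraph}(c).

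First, I would invoke the hypothesis of zero duality gap together with Theorem \ref{thm:epigraph}(c) to produce an optimal $\v{W}^\star$ of $\Psdp$ with $\rank \v{W}^\star = 1$ whose spectral decomposition $\v{W}^\star = \v{V}^\star \v{V}^{\Hsf,\star}$ yields a primal point $\v{V}^\star$ that is globally optimal for $\Pac$. Moreover, the optimal SDP multipliers $\v{\lambda}^{p,\star}, \v{\lambda}^{q,\star}, \v{\mu}^\star, \ul{\v{\mu}}^{v,\star}, \ol{\v{\mu}}^{v,\star}$ together with $\v{V}^\star$ satisfy the KKT system for $\Pac$. Consequently, setting $\v{\Lambda}^{p,\star} := \v{\lambda}^{p,\star}$ and $\v{\Lambda}^{q,\star} := \v{\lambda}^{q,\star}$ produces AC-LMPs at that global optimum of $\Pac$.

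Next, I would observe that the dispatch $\v{p}^{G,\star}, \v{q}^{G,\star}, \v{V}^\star$ is precisely the global optimum referenced in the corollary's hypothesis, at which the voltage lower limits are non-binding. Applying Theorem \ref{thm:2} to this global optimum yields revenue adequacy of the AC-LMPs $\v{\Lambda}^{p,\star}, \v{\Lambda}^{q,\star}$. Because these numerically coincide with the SDP-LMPs $\v{\lambda}^{p,\star}, \v{\lambda}^{q,\star}$, the payments defined via \eqref{eq:pi.cont}--\eqref{eq:pi.uncont} are identical, and hence $\textrm{MS} \geq 0$ for the SDP-LMP settlements at that dispatch.

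There is no substantive obstacle; the only subtlety is to explicitly note that the definition of SDP-LMP payments uses a locally optimal dispatch of $\Pac$, and under zero duality gap the dispatch produced by the rank-one $\v{W}^\star$ is in fact a global optimum of $\Pac$ to which both the hypothesis on binding voltage bounds and Theorem \ref{thm:2} apply. This makes the conclusion a one-line synthesis of Theorems \ref{thm:epigraph}(c) and \ref{thm:2}.
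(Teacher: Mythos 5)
Your proposal is correct and follows exactly the paper's own reasoning: the paper's proof is the one-line statement that the corollary is immediate from Theorems \ref{thm:epigraph}(c) and \ref{thm:2}, and you have simply spelled out the same synthesis in detail (rank-one $\v{W}^\star$ yields a global optimum of $\Pac$ at which the SDP multipliers serve as AC-LMPs, to which Theorem \ref{thm:2} then applies). Your added care in noting that the SDP-LMP payments are evaluated at a dispatch that is in fact globally optimal for $\Pac$ is a faithful elaboration, not a deviation.
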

The proof is immediate from Theorems \ref{thm:epigraph}(c) and \ref{thm:2}.
When there is duality gap, the payment scheme outlined here with SDP-LMPs may not provide adequate dispatch-following incentives--a property we study in detail in the next section.

\section{Market Equilibrium Properties of the Prices}
\label{sec:marketEq}
Ideally, a pricing scheme should be such that it is in the best interest of the market participants to follow the SO's dispatch signals. We now study if our candidate pricing mechanisms satisfy such properties. Call a pricing mechanism \emph{individually rational}, if the SO-prescribed dispatch maximizes the profit of a controllable asset, given the prices. That is, a dispatch $(p_k^{G, \star}, q_k^{G,\star})$ is individually rational if it solves
\begin{equation}
\begin{aligned}
& \underset{p_k^G, q_k^G}{\text{maximize}} && \gamma_{k}^{p} p_k^G + \gamma_{k}^{q} q_k^G - c_k(p_k^G, q_k^G),\\
& \text{subject to} && \ul{p}_k \leq p_k^G \leq \ol{p}_k,\ \ \ul{q}_k \leq q_k^G \leq \ol{q}_k,
\end{aligned}
\label{eq:ME.IR}
\end{equation}
given nodal real and reactive power prices $\gamma_k^{p}$ and $\gamma_k^q$. In such an event, a controllable asset has no incentive to deviate from its prescribed dispatch.

Consider a local optimum $(\v{p}^{G,\star}, \v{q}^{G, \star}, \v{V}^\star)$ of $\Pac$. This dispatch is individually rational with AC-LMPs, if $(p_k^{G, \star}, q_k^{G,\star})$ solves \eqref{eq:ME.IR} for all $k$ with $\v{\gamma}^p = \v{\Lambda}^{p,\star}, \v{\gamma}^q = \v{\Lambda}^{q,\star}$. With SDP-LMPs, the same dispatch is individually rational if the same condition holds for \eqref{eq:ME.IR} with $\v{\gamma}^p = \v{\lambda}^{p,\star}, \v{\gamma}^q = \v{\lambda}^{q,\star}$ for all $k$.


A dispatch is said to be \emph{efficient} and clears the market, if it optimally solves $\Pac$. We say a market mechanism supports a \emph{market equilibrium} if the dispatch clears the market and is individually rational, given the vectors of nodal prices.

As our next result will demonstrate, AC-LMPs associated with a locally optimal dispatch of $\Pac$ always provide adequate dispatch following incentives. However, SDP-LMPs coupled with that dispatch may not adequately incentivize all assets to follow the SO instructions. With even a global optimum of $\Pac$, SDP-LMPs may fail to provide such incentives with nonzero duality gap. In these cases, pricing via SDP-LMPs requires the provision of side-payments to controllable assets to deter possible deviations. Despite this critical drawback of SDP-based pricing, we show in the sequel that SDP-LMPs seek to minimize a sum of two nonnegative terms, one of which is the aggregate  side-payments. 

\newcommand{\local}{\textrm{local}}
For a local optimal solution $(\v{p}^{G, \star}, \v{q}^{G, \star}, \v{V}^\star)$ of $\Pac$, define $J^\star_{\AC,\local}$ as the objective of $\Pac$ at that local optimum. Also, define the \emph{lost opportunity costs} associated with SDP-LMPs and that dispatch as
\begin{align}
\textrm{LOC}(\v{\lambda}^{p}, \v{\lambda}^{q}) 
:= 
\sum_{k=1}^{n} \left[ \pi_{k}^{\opt}(\lambda_{k}^p, \lambda_{k}^q)  - \pi_{k}^{\SO}(\lambda_{k}^p, \lambda_{q}^q) \right],
\label{eq:LOC}
\end{align}
where $\pi_{k}^{\opt}$ is the optimal cost of \eqref{eq:ME.IR} with $\v{\gamma}^p = \v{\lambda}^{p,\star}, \v{\gamma}^q = \v{\lambda}^{q,\star}$ and
\begin{align}
\begin{aligned}
\pi_{k}^{\SO}(\lambda_{k}^p, \lambda_{q}^q) 
& := \lambda_{k}^p p_k^{G,\star} + \lambda_{k}^qq_k^{G,\star} - c_k(p_k^{G,\star}, q_k^{G,\star}).
\end{aligned}
\label{eq:LOC.part.def}
\end{align}
That is, given the electricity prices, $\pi_{k}^{\SO}$ denotes the profit of the controllable asset at bus $k$ from following the SO-prescribed dispatch, while $\pi_k^\opt$ is the maximum profit that asset can garner. Finally, with the same local optimum of $\Pac$, define the \emph{product revenue shortfall} as
\begin{align}
\begin{aligned}
\textrm{PRS}(\v{\mu}, \ol{\v{\mu}}^v, \ul{\v{\mu}}^v, \v{U}) 
&  := \v{V}^{\Hsf,\star} \v{U} \v{V^\star} + \sum_{k\ell = 1}^{m} \mu_{k\ell} \left(f_{k\ell} - \v{V}^{{\Hsf},\star} \v{\Phi}_{k \ell} \v{V}^\star \right)\\
& \qquad + \sum_{k=1}^{n} \ol{\mu}^{v}_k \left(\ol{v}_k^2 - | V_k^\star |^2 \right)
+ \sum_{k=1}^{n} \ul{\mu}^{v}_k \left( | V_k^\star |^2 - \ul{v}_k^2\right)
\end{aligned}
\label{eq:PRS.def}
\end{align}
for $\v{\mu} \geq 0, \ol{\v{\mu}}^v \geq 0, \ul{\v{\mu}}^v \geq 0, \v{U} \succeq 0$.

\begin{theorem}
	\label{thm:gap}
	The following assertions hold:
	\begin{enumerate}[label=(\alph*)]
		\item A local optimum of $\Pac$ and its associated AC-LMPs support a market equilibrium. That equilibrium is efficient if the optimum is global.
		
		\item A locally optimal solution of $\Pac$, together with SDP-LMPs may not always support a market equilibrium and 
		\begin{alignat*}{2}
		J_{\AC,\local}^\star - J_\SDP^\star
		& = \underset{\substack{\v{\lambda}^{p}, \v{\lambda}^{q}, \v{\mu}\\ \v{U} \ol{\v{\mu}}^v, \ul{\v{\mu}}^v }}{\text{minimum}} \ \  
		&&   \textrm{LOC}\left(\v{\lambda}_{p}, \v{\lambda}_{q} \right) 
		+ \textrm{PRS}(\v{\mu}, \ol{\v{\mu}}^v, \ul{\v{\mu}}^v, \v{U}),
		\\ 
		& \ \ \ \text{subject to} && 
		\v{U} 
		= \sum_{k=1}^{n} \left( \lambda_k^p \v{\Phi}_k 
		+  \lambda_k^q \v{\Psi}_k  \right)
		+ \sum_{k\ell=1}^{m} \mu_{k\ell} \v{\Phi}_{k\ell} 
		+ \sum_{k=1}^{n} \left( \ol{\mu}^{v}_k - \ul{\mu}^{v}_k \right)\bone_k \bone_k^{\Hsf}, \\
		&&&  \ \v{\mu} \geq 0, \ \ol{\v{\mu}}^v \geq 0, \ \ul{\v{\mu}}^v \geq 0, \ \v{U} \succeq 0.
		\end{alignat*}
		With a global optimal solution of $\Pac$, the above difference equals the duality gap of $\Pac$. 
		
		\item When $\Pac$ has zero duality gap, a globally optimal solution of $\Pac$ together with SDP-LMPs, support a market equilibrium.
	\end{enumerate}
	
\end{theorem}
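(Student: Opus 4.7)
For part (a), the plan is to recognize that the KKT conditions for the asset-level profit-maximization problem \eqref{eq:ME.IR} are embedded inside the KKT system for $\Pac$ in Figure \ref{fig:KKT.AC}. Indeed, the stationarity relation \eqref{eq:grad.3}, together with the dual feasibility of $\ol{\mu}_k^{p,\star}, \ul{\mu}_k^{p,\star}, \ol{\mu}_k^{q,\star}, \ul{\mu}_k^{q,\star}$ and the complementary slackness \eqref{eq:CS.2}, is precisely the KKT system for \eqref{eq:ME.IR} evaluated at prices $(\gamma_k^p, \gamma_k^q) = (\Lambda_k^{p,\star}, \Lambda_k^{q,\star})$. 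Since $c_k$ is convex and the constraints of \eqref{eq:ME.IR} are affine, these KKT conditions are sufficient for global optimality of $(p_k^{G,\star}, q_k^{G,\star})$ in \eqref{eq:ME.IR}, yielding individual rationality. Efficiency at a global optimum of $\Pac$ is then immediate from the definitions.

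Part (b) is the technical heart of the theorem. The plan is to push $J_\SDP^\star$ through strong duality for $\Psdp$ (Slater holds under the standing hypothesis) and invoke Theorem~\ref{thm:epigraph}(a), so that $J_\SDP^\star$ equals the optimal value of $\Dcal\Psdp$. This lets us write
\begin{align*}
J_{\AC,\local}^\star - J_\SDP^\star = \inf_{\substack{\v{\lambda}^p, \v{\lambda}^q, \v{U} \succeq 0 \\ \v{\mu}, \ol{\v{\mu}}^v, \ul{\v{\mu}}^v \geq 0}} \left[ J_{\AC,\local}^\star - \inf_{\substack{\v{W} \in \Hset^n \\ (\v{p}^G, \v{q}^G) \in \Gset}} \Lcal_W \right].
\end{align*}
Since $\Lcal_W$ is linear in $\v{W}$, the inner minimum over $\v{W} \in \Hset^n$ is finite iff the coefficient of $\v{W}$ vanishes, and this vanishing condition is exactly the affine equality constraint on $\v{U}$ stated in the theorem. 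On the slice where this constraint holds, the separable $(\v{p}^G, \v{q}^G)$-minimization over $\Gset$ evaluates to $-\sum_k \pi_k^\opt(\lambda_k^p, \lambda_k^q)$. Substituting the identity $\sum_k c_k(p_k^{G,\star}, q_k^{G,\star}) = \sum_k[\lambda_k^p p_k^{G,\star} + \lambda_k^q q_k^{G,\star} - \pi_k^\SO]$ and invoking the primal feasibility relations $p_k^{G,\star} - p_k^D = \v{V}^{\Hsf,\star}\v{\Phi}_k\v{V}^\star$ and $q_k^{G,\star} - q_k^D = \v{V}^{\Hsf,\star}\v{\Psi}_k\v{V}^\star$ at the local optimum, the linear-in-$\v{\lambda}$ terms collapse, via the affine constraint on $\v{U}$, into $\v{V}^{\Hsf,\star}\v{U}\v{V}^\star$ plus slack-type terms in $\mu_{k\ell}$, $\ol{\mu}_k^v$, $\ul{\mu}_k^v$; the resulting assembly is exactly PRS, while the residual $\sum_k(\pi_k^\opt - \pi_k^\SO)$ is LOC. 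The failure-of-equilibrium assertion in part (b) is certified by the three-bus counterexample in Section~\ref{sec:3bus}. Finally, specializing to a global minimum of $\Pac$ replaces $J_{\AC,\local}^\star$ by $J_\AC^\star$, so the LHS becomes the duality gap of $\Pac$.

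Part (c) is a direct corollary: under zero duality gap, Theorem~\ref{thm:epigraph}(c) identifies the SDP-LMPs with AC-LMPs at a global optimum of $\Pac$, and part (a) then delivers the desired market equilibrium.

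The main obstacle is the algebraic bookkeeping in part (b). One must carefully track how the $\v{W}$-linear coefficient produces the affine equality and PSD constraints on $\v{U}$, how the quadratic Hermitian forms in $\v{V}^\star$ re-emerge via primal feasibility at the local optimum, and how each leftover scalar lands in either LOC or PRS with the correct sign. Obtaining the formula as an \emph{equality} (rather than merely an inequality) hinges on ensuring that $\v{U} \succeq 0$ arises organically from the $\v{W}$-minimization, so that no dual direction is spuriously excluded.
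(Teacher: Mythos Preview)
Your proposal is correct and follows essentially the same route as the paper: part (a) identifies the KKT system of \eqref{eq:ME.IR} as a subset of the KKT conditions for $\Pac$, part (b) pushes $J_\SDP^\star$ through strong duality and the dual equivalence of Theorem~\ref{thm:epigraph}(a), then uses primal feasibility at the local optimum together with the affine constraint on $\v{U}$ to split the resulting expression into $\mathrm{LOC}+\mathrm{PRS}$, and part (c) combines part (a) with Theorem~\ref{thm:epigraph}(c). The only cosmetic difference is that the paper reuses the intermediate function $\zeta$ from \eqref{eq:zeta.def} (already derived in the proof of Theorem~\ref{thm:epigraph}(a)) rather than re-deriving the $\v{W}$-reduction of $\Lcal_W$ from scratch; also note that $\v{U}\succeq 0$ comes from dual feasibility (it is the multiplier for $\v{W}\succeq 0$) rather than from the $\v{W}$-minimization itself, which only produces the affine equality.
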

\begin{proof}
	We prove each part separately.
	
	\noindent $\bullet$ Proof of part (a): The optimization problem in \eqref{eq:ME.IR} is convex with linear inequality constraints, for which KKT optimality conditions are sufficient. Assign Lagrange multipliers $\ol{M}_k^p$, $\ul{M}_k^p$, $\ol{M}_k^q$ and $\ul{M}_k^q$ to the upper and lower, real and reactive power limits in 
	\eqref{eq:ME.IR}. Then, the KKT conditions of \eqref{eq:ME.IR} comprise its feasibility constraints, the dual feasibility constraints $\ol{M}_k^p$, $\ul{M}_k^p$, $\ol{M}_k^q, \ul{M}_k^q \geq 0$, the stationarity conditions
	\begin{align}
	\begin{aligned}
	&  \nabla_{p_k^G}  \left[ c_k(p_k^{G,\star},q_k^{G,\star}) \right]
	- \gamma_k^{p}
	+ \ol{M}_k^{p,\star} - \ul{M}_k^{p,\star} = 0,  \\
	&   \nabla_{q_k^G}  \left[ c_k(p_k^{G,\star},q_k^{G,\star}) \right]
	- \gamma_k^{q}
	+ \ol{M}_k^{q,\star} - \ul{M}_k^{q,\star} = 0,
	\end{aligned}
	\label{eq:kkt.me.ir}
	\end{align}
	and the complementary slackness conditions
	\begin{align}
	\begin{aligned}
	\ul{M}^{q,\star}_k \left(q^{G,\star}_k - \ul{q}^G_k \right) 
	= 
	\ol{M}^{q,\star}_k \left(q^{G,\star}_k - \ol{q}^G_k \right) 
	= 
	\ul{M}^{p,\star}_k \left(p^{G,\star}_k - \ul{p}^G_k \right) 
	= 
	\ol{M}^{p,\star}_k \left(p^{G,\star}_k - \ol{p}^G_k \right) = 0.
	\end{aligned}
	\end{align}
	These KKT conditions define a subset of the KKT conditions of $\Pac$ with $\v{\gamma}_k$'s as $\Lambda_k$'s and $M_k$'s as $\mu_k$'s, proving the result.

	\noindent $\bullet$ Proof of part (b): For a local optimum of $\Pac$, we have
	\begin{gather}
	\begin{gathered}
	p^D_k = p^{G,\star}_k - \v{V}^{\Hsf,\star}  \v{\Phi}_k \v{V}^\star, \ \ q^D_k = q^{G,\star}_k - \v{V}^{\Hsf,\star}  \v{\Psi}_k \v{V}^\star,
	\ \ 
	{J^\star_{\AC,\local}}=\sum_{k=1}^n c_k(p_k^{G,\star}, q_k^{G,\star}).
	\end{gathered}
	\label{eq:nodalDem}
	\end{gather}
	Utilizing these relations in the definition of $\zeta$ in \eqref{eq:zeta.def}, we get
	\begin{align}
	{
		\begin{aligned}
		&\zeta(\v{p}^G, \v{q}^G,  \v{\lambda}^p, \v{\lambda}^q, \v{\mu}, \ol{\v{\mu}}^v, \ul{\v{\mu}}^v) \\
		&= {J^\star_{\AC,\local}} + \sum_{k=1}^{n} \left\lbrack c_k(p^G_k, q^G_k) - \lambda_k^p p^G_k - \lambda_k^q q^G_k \right\rbrack 
		+ \sum_{k=1}^{n} \left[ \lambda_k^p p^{G,\star}_k + \lambda_k^q q^{G,\star}_k - c_k(p^{G,\star}_k, q^{G,\star}_k) \right]
		\\
		& \quad - \sum_{k\ell=1}^{m} \mu_{k\ell}  f_{k\ell} 
		+  \sum_{k=1}^{n} \left(\ul{\mu}^{v}_k \ul{v}_k^2 - \ol{\mu}^{v}_k \ol{v}_k^2\right) 
		- \sum_{k=1}^{n} \left\lbrack \lambda_k^p \v{V}^{\Hsf,\star}  \v{\Phi}_k \v{V}^\star + \lambda_k^q \v{V}^{\Hsf,\star}\v{\Psi}_k \v{V}^\star  \right\rbrack.
		\end{aligned}
	}
	\label{eq:zeta.local}
	\end{align}
	Recall that $\Dcal\Pac = \Dcal\Psdp$ defines the common dual program of $\Pac$ and $\Psdp$. Strong duality holds for $\Psdp$, and hence, $J_\SDP^\star$ is the optimal cost of \eqref{eq:DPac.2}. Utilizing \eqref{eq:zeta.local} in \eqref{eq:DPac.2}, we get
	\begin{alignat}{2}
	{
		\begin{aligned}
		&J^\star_{\AC,\local} - J^\star_{\SDP}
		\\
		& =  - \underset{\substack{\v{\mu},  \ol{\v{\mu}}^v, \ul{\v{\mu}}^v \geq 0 \\
				{\v{U}}, \v{\Lambda}^p, \v{\Lambda}^q}}{\text{maximum}} \ \ 
		\underbrace{\sum_{k=1}^{n} \left[ - \pi^\opt_k(\lambda^p_k, \lambda^q_k) + \pi^\SO_k(\lambda^p_k, \lambda^q_k) \right]}_{:=-\textrm{LOC}\left(\v{\lambda}^p, \v{\lambda}^q \right)} 
		\\ &  \qquad \qquad \qquad
		- \sum_{k\ell=1}^{m} \mu_{k\ell}  f_{k\ell} 
		+  \sum_{k=1}^{n} \left(\ul{\mu}^{v}_k \ul{v}_k^2 - \ol{\mu}^{v}_k \ol{v}_k^2\right) 
		- \sum_{k=1}^{n} \left[ \lambda_k^p \v{V}^{\Hsf,\star}  \v{\Phi}_k \v{V}^\star 
		+\lambda_k^q \v{V}^{\Hsf,\star}\v{\Psi}_k \v{V}^\star \right],\\
		& \qquad\text{subject to} \ \ 
		\v{U} 
		= \sum_{k=1}^{n} \left( \lambda_k^p \v{\Phi}_k 
		+  \lambda_k^q \v{\Psi}_k \right)
		+ \sum_{k\ell=1}^{m} \mu_{k\ell} \v{\Phi}_{k\ell} 
		+ \sum_{k=1}^{n} \left( \ol{\mu}^{v}_k - \ul{\mu}^{v}_k \right)\bone_k \bone_k^{\Hsf},
		\\
		& \qquad \qquad \qquad \ \ \v{\mu} \geq 0, \ \ol{\v{\mu}}^v \geq 0, \ \ul{\v{\mu}}^v \geq 0, \ {\v{U}} \succeq 0.
		\end{aligned}
	}
	\label{eq:AC.local.gap}
	\end{alignat}
	Write the objective function of the above optimization problem as $\eta -\textrm{LOC}\left(\v{\lambda}^p, \v{\lambda}^q \right)$. Then, the expression for $\v{U}$ in the constraint can be used to simplify $\eta$ as
	
	\begin{align}
	\begin{aligned}
	-\eta
	&= \sum_{k\ell=1}^{m} \mu_{k\ell}  f_{k\ell} 
	-  \sum_{k=1}^{n} \left(\ul{\mu}^{v}_k \ul{v}_k^2 - \ol{\mu}^{v}_k \ol{v}_k^2\right) 
	+ \v{V}^{\Hsf,\star} {\v{U}} \v{V}^\star 
	\\ 
	& 
	\ - \sum_{k\ell=1}^{m} \mu_{k\ell} \v{V}^{\Hsf,\star} \v{\Phi}_{k\ell} \v{V}^\star 
	- \sum_{k=1}^{n}  \left( \ol{\mu}^{v}_k - \ul{\mu}^{v}_k \right) \underbrace{\v{V}^{\Hsf,\star} \bone_k \bone_k^{\Hsf}  \v{V}^\star}_{:=|V_k^\star|^2} \\
	& = \textrm{PRS} \left(\v{\mu}, \ol{\v{\mu}}^v, \ul{\v{\mu}}^v, {\v{U}} \right).
	\end{aligned}
	\end{align}

	\noindent $\bullet$ Proof of part (c):
	This follows from combining part (a) and Theorem \ref{thm:epigraph}(c).
\end{proof}


We discuss the implications of Theorem \ref{thm:gap} and contrast it with similar results known for CHPs. Note that our characterization of the cost gap between a locally optimal solution of $\Pac$ and its dual in Theorem \ref{thm:gap}(b) bears a striking resemblance with the duality gap of CHPs derived in \cite{Schiro}. Indeed, the analysis in \cite{Schiro} shows that CHPs seek to minimize the sum of LOC and PRS defined within the context of a unit commitment problem. This parallel between CHPs and SDP-LMPs is not surprising, given that both advocate pricing based on the dual (or the double dual) of the nonconvex market clearing problem, albeit to tackle two different kinds of nonconvexities. However, we point out that Theorem \ref{thm:gap}(b) does \emph{not} enjoy the same interpretation as the duality gap result for CHPs. Since CHPs minimize LOC $+$ PRS, which are individually non-negative, its attempt to reduce LOC can be viewed as a means to mitigate the net out-of-market settlements that the SO must provide the market participants for them to follow the SO-intended dispatch. In unit commitment problems, nodally uniform equilibrium prices may not exist. Consequently, even though the PRS term skews the objective of CHPs from pure LOC reduction, CHPs offer a principled mechanism to approach said reduction.
Our result reveals that SDP-LMPs also minimize LOC $+$ PRS. However, if minimization of LOC is the only goal, AC-LMPs achieve that goal, per Theorem \ref{thm:gap}(a). SDP-LMPs, on the other hand, may fail to eliminate the need for  out-of-market settlements, even though it tries to shrink it as Theorem \ref{thm:gap}(b) reveals.
When duality gap of $\Pac$ vanishes, Theorem \ref{thm:gap}(c) shows that SDP-LMPs obviate the need for such settlements, but only when the dispatch is a global optimum of $\Pac$. In such a case, LOC is provably zero, from Theorem \ref{thm:gap}(b).

\setlength{\tabcolsep}{8pt}
\begin{table}[ht]
	\centering
	\caption{Parameter choices for the experiments  on the three-bus power network.}
	\label{tab:ra}
	{\small
	\begin{tabular}{c c c c c c c c c c c c c} 
		\toprule
		Exp. &  $f$ & $r$ & $x$ & $k$ &
		$p_k^D$ & $q_k^D$ & $\ol{p}_k^G$ & $\ol{q}_k^G$ & $\ul{v}_k^2$ & $\ol{v}_k^2$ & $C^1_k$ & $C^2_k$ \\ [0.5ex] 
		\midrule
		\multicolumn{1}{c}{} & \multicolumn{1}{c}{} & \multicolumn{1}{c}{} & \multicolumn{1}{c}{} & 1 & 0.79 & 0.50 & 2.00 & 0.90 & 0.95 & 0.98 & 10 & 1.0 \\
		\multicolumn{1}{c}{\multirow{-1}{*}{1}} & \multicolumn{1}{c}{\multirow{-1}{*}{0.24}} & \multicolumn{1}{c}{\multirow{-1}{*}{0.01}} & \multicolumn{1}{c}{\multirow{-1}{*}{0.01}} & 2 & 0 & 0 & 1.20 & 0.21 & 0.98 & 1.01 & 10 & 1.0 \\
		\multicolumn{1}{c}{} & \multicolumn{1}{c}{} & \multicolumn{1}{c}{} & \multicolumn{1}{c}{} & 3 & 1.90 & 0 & 2.00 & 2.00 & 0.99 & 1.01 & 10 & 1.0
		\\
		\midrule
		\multicolumn{1}{c}{} & \multicolumn{1}{c}{} & \multicolumn{1}{c}{} & \multicolumn{1}{c}{} & 1 & 0.79 & 0.10 & 2.00 & 0.90 & 0.95 & 1.05 & 10 & 1.0 \\
		\multicolumn{1}{c}{\multirow{-1}{*}{2}} & \multicolumn{1}{c}{\multirow{-1}{*}{0.20}} & \multicolumn{1}{c}{\multirow{-1}{*}{0.01}} & \multicolumn{1}{c}{\multirow{-1}{*}{0.01}} & 2 & 0 & 0 & 1.20 & 0.21 & 0.98 & 1.01 & 10 & 1.0 \\
		\multicolumn{1}{c}{} & \multicolumn{1}{c}{} & \multicolumn{1}{c}{} & \multicolumn{1}{c}{} & 3 & 2.00 & 0 & 2.00 & 2.00 & 0.95 & 1.01 & 10 & 1.0\\ \midrule
		\multicolumn{1}{c}{} & \multicolumn{1}{c}{} & \multicolumn{1}{c}{} & \multicolumn{1}{c}{} & 1 & 0.79 & 0.50 & 2.00 & 0.90 & 1.01 & 1.05 & 10 & 1.0 \\
		\multicolumn{1}{c}{\multirow{-1}{*}{3}} & \multicolumn{1}{c}{\multirow{-1}{*}{0.40}} & \multicolumn{1}{c}{\multirow{-1}{*}{0.01}} & \multicolumn{1}{c}{\multirow{-1}{*}{0.01}} & 2 & 0 & 0 & 1.20 & 0.21 & 0.98 & 1.01 & 10 & 1.0 \\
		\multicolumn{1}{c}{} & \multicolumn{1}{c}{} & \multicolumn{1}{c}{} & \multicolumn{1}{c}{} & 3 & 2.00 & 0 & 2.00 & 2.00 & 0.99 & 1.01 & 10 & 1.0 \\
		\midrule
\multicolumn{1}{c}{} & \multicolumn{1}{c}{} & \multicolumn{1}{c}{} & \multicolumn{1}{c}{} & 1 & 1.10 & 1.00 & 1.00 & 2.00 & 0.98 & 1.01 & 10 & 0.1 \\
\multicolumn{1}{c}{\multirow{-1}{*}{4}} & \multicolumn{1}{c}{\multirow{-1}{*}{0.90}} & \multicolumn{1}{c}{\multirow{-1}{*}{0.03}} & \multicolumn{1}{c}{\multirow{-1}{*}{0.75}} & 2 & 1.10 & 1.00 & 3.00 & 2.00 & 0.99 & 1.01 & 1 & 0.1 \\
\multicolumn{1}{c}{} & \multicolumn{1}{c}{} & \multicolumn{1}{c}{} & \multicolumn{1}{c}{} & 3 & 0.95 & 1.00 & 0 & 2.00 & 0.95 & 1.02 & 0 & 0 \\	
 \bottomrule	
\end{tabular}
}
\end{table}
\begin{table}[h]
	\centering
	\caption{Outputs of the experiments on the three-bus power network.}
	\label{tab:ra.output}
	{\small
	\begin{tabular}{c c c c c c c c c c c c} 
		\toprule
		Exp. & $k$ &
		$p_{k,\SDP}^{G,\star}$ & $q_{k,\SDP}^{G,\star}$ & $p_{k,\AC}^{G,\star}$ & $q_{k,\AC}^{G,\star}$ & ${\lambda}_k^{p,\star}$ & ${\lambda}_k^{q,\star}$ & ${\Lambda}_k^{p,\star}$ & ${\Lambda}_k^{q,\star}$ & $|V_k^\star|^2$ & $\textrm{MS}_{\AC}$ \\ [0.5ex] 
		\midrule
		\multicolumn{1}{c}{} & 1 & 0.39 & 0 & 0.39 & 0 & 10.77 & -4.33 & 10.77 & -4.33 & 0.98 & \multicolumn{1}{c}{}\\
		\multicolumn{1}{c}{\multirow{-1}{*}{1}} & 2 & 0.31 & 0 & 0.31 & 0 & 10.63 & -2.16 & 10.63 & -2.16 & 0.99 & \multicolumn{1}{c}{\multirow{-1}{*}{-2.44}}\\
		\multicolumn{1}{c}{} & 3 & 1.99 & 0.50 & 1.99 & 0.50 & 13.99 & 0 & 13.99 & 0 & 0.99 & \multicolumn{1}{c}{}
		\\
		\midrule
		\multicolumn{1}{c}{} & 1 & 0.92 & 0.10 & 0.92 & 0.10 & 11.85 & 0 & 11.85 & 0 & 1.01 & \multicolumn{1}{c}{}\\
		\multicolumn{1}{c}{\multirow{-1}{*}{2}} &  2 & 0.23 & 0 & 0.23 & 0 & 10.47 & 0 & 10.47 & 0 & 1.01 &\multicolumn{1}{c}{\multirow{-1}{*}{0.83}}\\
		\multicolumn{1}{c}{} & 3 & 1.63 & 0 & 1.63 & 0 & 13.27 & 0 & 13.27 & 0 & 1.01 & \multicolumn{1}{c}{}\\ 
		\midrule
		\multicolumn{1}{c}{}  & 1 & 1.19 & 0.50 & 1.19 & 0.50 & 12.38 & 0 & 12.38 & 0 & 1.01 & \multicolumn{1}{c}{}\\
		\multicolumn{1}{c}{\multirow{-1}{*}{3}} & 2 & 0.40 & 0 & 0.40 & 0 & 10.80 & -1.09 & 10.80 & -1.09 & 1.01 & \multicolumn{1}{c}{\multirow{-1}{*}{0.62}}\\\multicolumn{1}{c}{} & 3 & 1.20 & 0 & 1.20 & 0 & 12.41 & -0.55 & 12.41 & -0.55 & 1.00 & \multicolumn{1}{c}{}\\ 
		\midrule
		\multicolumn{1}{c}{}  & 1 & 0.31 & 1.44 & 0.97 & 1.09 & 10.06 & 0 & 10.20 & 0 & 1.01 & \multicolumn{1}{c}{}\\
		\multicolumn{1}{c}{\multirow{-1}{*}{4}} & 2 & 2.90 & 1.70 & 2.20 & 1.20 & 1.58 & 0 & 1.44 & 0 & 1.01 & \multicolumn{1}{c}{\multirow{-1}{*}{17.55}}\\\multicolumn{1}{c}{} & 3 & 0 & 1.37 & 0 & 1.26 & 11.52 & 0 & 18.78 & 0 & 1.02 & \multicolumn{1}{c}{}\\
 \bottomrule	
	\end{tabular}
	}
\end{table}


\section{A Three-Bus Network Example}
\label{sec:3bus}


We illustrate our theoretical results through a 3-bus power network with generators and demands at each bus, that are connected via three lines with identical parameters $f$, $r$, $x$. Assume quadratic generator costs of the form $c_k(p_k^G, q_k^G)  := C^2_k \left( p_k^G \right)^2 + C^1_k p_k^G$ at each bus $k$. Set $\ul{\v{p}}^G = \ul{\v{q}}^G=0$ throughout.
$\Psdp$ is solved using CVX 2.2, a package for specifying and solving convex programs (see \cite{cvx,gb08}), in MATLAB R2020b with Mosek 9.1.9 as the solver, while $\Pac$ is solved using Matpower 7.1 (see \cite{zimmerman2010matpower}).  We present results from four experiments on this network. Parameter choices and outputs are listed in Tables \ref{tab:ra} and \ref{tab:ra.output}, respectively. The code is available at \url{https://github.com/Mariola-Nd/RLMP.git}.

\subsection{Experiments with Zero Duality Gap}
In the first three experiments, we obtain $\rank \v{W}^\star = 1$ from $\Psdp$. Thus, duality gap is zero. Also, Matpower discovers a certifiably global optimal solution with the same cost as $\Psdp$. Not surprisingly, AC-LMPs and SDP-LMPs coincide, i.e., we obtain $\v{\Lambda}^{p,\star} = \v{\lambda}^{p,\star}$ and $\v{\Lambda}^{q,\star} = \v{\lambda}^{q,\star}$ as Theorem \ref{thm:epigraph}(c) dictates. In the first experiment, note that the voltage lower limit at bus 3 binds. This is an example where the sufficient condition in Theorem \ref{thm:2} for revenue adequacy is violated and we do obtain MS $< 0$. 
In the second experiment, voltage lower limits do not bind at any bus. Indeed, we obtain a non-negative MS, as Theorem \ref{thm:2} dictates. For the third case, the lower limit on voltage magnitude binds at bus 1. Yet, we obtain MS $> 0$, indicating that our criterion identified in Theorem \ref{thm:2} for revenue adequacy is \emph{sufficient but not necessary}.

Both prices support an efficient market equilibrium. Specifically, the dispatch of each generator maximizes its profits, given the prices. To explicitly illustrate this, consider the dispatch of generator at bus 1 in the second experiment. We verified using CVX that indeed $(p_1^{G,\star}, q_1^{G, \star}) = (0.92, 0.10)$ yields the maximum profit attainable by generator 1 within $[0, 2.00] \times [0,0.90]$, given the prices $ \lambda^{p, \star}_1 = \Lambda^{p, \star}_1 = 11.85$ and $\lambda^{q, \star}_1 = \Lambda^{q, \star}_1 = 0$. Similar conclusions hold for all generators.

\subsection{Experiment with Possibly Non-Zero Duality Gap}
With the parameters chosen for experiment 4, Matpower converges to a solution of $\Pac$ with cost $\$12.51$/MWh while $\Psdp$ finds a solution with the lower cost of $\$6.86$/MWh. The solution of $\Pac$ is a local optimal solution, according to Matpower; global optimality remains difficult to certify. $\Psdp$ returns a solution with $\rank \v{W}^\star = 2$.
Thus, $\Pac$ has a possibly non-zero duality gap. From Table \ref{tab:ra.output}, it is evident that the dispatch and the prices from $\Pac$ and $\Psdp$ are different. 
With AC-LMPs, the locally optimal dispatch of $\Pac$ gives a positive MS of 17.55, where the lower limits on the voltage magnitudes do not bind at any bus, as we expect from Theorem \ref{thm:2}. 
If market participants are compensated via the AC-LMPs $(\v{\Lambda}^{p,\star}, \v{\Lambda}^{q,\star})$ for producing the optimal solution $(\v{p}^{G,\star}, \v{q}^{G,\star})$ from $\Pac$, no side-payments are necessary. To illustrate this fact further, consider the dispatch of generator at bus 2. Indeed, $(p_2^{G,\star}, q_2^{G, \star}) = (2.20, 1.20)$ yields the maximum profit attainable by generator 2 within $[0, 3.00] \times [0, 2.00]$, given the prices $\Lambda^{p, \star}_2 = 1.44$ and $\Lambda^{q, \star}_2 = 0$.
On the other hand, pricing via SDP-LMPs $(\v{\lambda}^{p,\star}, \v{\lambda}^{q,\star})$ would not adequately incentivize participants to follow the dispatch solution of $\Pac$. Specifically, given $\lambda^{p,\star}_2 = 1.58, \lambda^{q,\star}_2 = 0$, that generator's profit becomes $\pi_{2}^{\SO}({\lambda}_{2}^{p,\star}, {\lambda}_{2}^{q,\star}) = 0.79$ with the dispatch from $\Pac$. The maximum attainable profit of that generator with these prices, however, is $\pi^\opt_2({\lambda}_{2}^{p,\star}, {\lambda}_{2}^{q,\star}) = 0.84$ with a production of $\left(p^G_2, q^G_2\right) = (2.90, 0)$. A side-payment 
is necessary for the generator to follow SO's dispatch signal.



\section{SDP-LMPs For Distribution Networks}
\label{sec:DLMP}
The aim to harness flexibility offered by distributed energy resources (DERs) at the grid-edge has motivated research in defining appropriate price signals for compensating energy transactions in distribution networks, e.g., see  \cite{Ntakou}, \cite{PapavasiliouDLMPs} and \cite{Yuan2018}.
Suggested distribution LMPs (DLMPs) aim to reflect the locational value of DERs as discussed in \cite{Li} and \cite{Huang}.
We argue that prices from $\Psdp$ become the second-order cone programming (SOCP) based DLMPs in \cite{PapavasiliouDLMPs, Caramanis} over acyclic distribution grids.

Represent an acyclic distribution network as a directed graph over the set $\Nset$ of buses  with \emph{directed} edges in $\Eset$. The directions can be arbitrarily chosen. Ignore shunt admittances for simplicity. 
Denote by $k \to \ell$, a directed edge from bus $k$ to bus $\ell$. Define ${1}/{y_{k\ell}} :=r_{k \ell} + \bm{i} x_{k\ell}$
as the impedance of line $k \to \ell$. Using this notation, consider the following optimization program.
\begin{subequations}
	\begin{alignat}{2}
	&\Psocp \ : \notag
	\\
	& \ {\text{minimize}}  \  \sum_{k=1}^n c_k(p^G_k, q^G_k), \notag \\ 
	& \ \text{subject to}  \notag \ \
	\\
	& \   p_k^G - p_k^D = \sum_{\ell': k \to \ell'} P_{k\ell'} 
	-\sum_{\ell': \ell' \to k} \left( P_{\ell' k} - r_{\ell' k} J_{\ell' k}\right),
	\label{eq:SOCP.pV}
	\\ 
	& \   q_k^G - q_k^D =  \sum_{\ell': k \to \ell'} Q_{k\ell'} 
	-\sum_{\ell': \ell' \to k} \left( Q_{\ell' k} - x_{k\ell'} J_{\ell' k}\right),
	\label{eq:SOCP.qV}
	\\
	& \   P_{k\ell} \leq f_{k \ell}, 
	\ 
	r_{k \ell} J_{k \ell} - P_{k \ell} \leq f_{k \ell}, \label{eq:SOCP.fLim}
	\\
	& \  \ul{p}_k \leq p_k^G \leq \ol{p}_k, 
	\ul{q}_k \leq q_k^G \leq \ol{q}_k, 
	\label{eq:SOCP.pqG}
	\\
	& \  \ul{v}_k^2 \leq w_k \leq \ol{v}_k^2,  
	\label{eq:SOCP.v}
	\\			
	& \  w_\ell = w_k - 2 (P_{k\ell} r_{k \ell} + Q_{k \ell} x_{k\ell}) 
	+ (r_{k\ell}^2 + x_{k\ell}^2) J_{k\ell},
	\label{eq:SOCP.vPQ}
	\\
	& \  P_{k\ell}^2 + Q_{k\ell}^2 \leq J_{k \ell} w_k
	\label{eq:SOCP.socp}
	\\
	& \   \text{for } k \in \Nset, \ k \to \ell \in \Eset 
	\notag
	\end{alignat}
	\label{eq:Psocp}
\end{subequations}
over the variables $\v{p}^G, \v{q}^G, \v{w}, \v{P}, \v{Q}, \v{J}$. All constraints in the above problem are linear except \eqref{eq:SOCP.socp} that is a second-order cone constraint. In fact, the inequality in \eqref{eq:SOCP.socp} replaced by an equality amounts to a reformulation of $\Pac$. The inequality potentially expands the feasible set of $\Pac$, making $\Psocp$ a convex relaxation of $\Pac$. 
When solved with an equality in \eqref{eq:SOCP.socp}, the variables $P_{k\ell}$ and $Q_{k \ell}$ denote the sending-end real and reactive powers from bus $k$ towards bus $\ell$. Then, $J_{k\ell}$ becomes the squared current magnitude on that line. And, $w_k$ equals the squared voltage magnitude at bus $k$.  The SOCP-based relaxation presented above utilizes the so-called ``branch flow model'' of Kirchhoff's laws over a distribution network, and has been extensively analyzed in \cite{Gan}, \cite{NaLi} and \cite{Farivar}. Constraints \eqref{eq:SOCP.pV} and \eqref{eq:SOCP.qV} encode nodal real and reactive power balance, respectively. Inequalities in \eqref{eq:SOCP.fLim} enforce limits on distribution line flows. The capacities of controllable assets are given by \eqref{eq:SOCP.pqG} and voltage limits are encoded in \eqref{eq:SOCP.v}. The equality in \eqref{eq:SOCP.vPQ} relates the power flows on lines with squared voltage magnitudes across the lines. 

Associate Lagrange multipliers $\rho_{k}^p$ and $\rho_{k}^q$ with the power balance constraints \eqref{eq:SOCP.pV} and \eqref{eq:SOCP.qV}, respectively. Call their respective collections across the network as $\v{\rho}^{p}$ and $\v{\rho}^{q}$.
\begin{definition}[SOCP-DLMPs] The optimal Lagrange multipliers $\v{\rho}^{p,\star}$ and $\v{\rho}^{q,\star}$ for $\Psocp$ define the SOCP distribution locational marginal prices (SOCP-DLMPs) for real and reactive powers, respectively.
\end{definition}
One can correspondingly consider $\Psdp$ for the same radial network and derive SDP-LMPs $\left(\v{\lambda}^{p,\star}, \v{\lambda}^{q,\star}\right)$ as optimal dual multipliers for $\Psdp$. We now establish a relationship between SOCP-DLMPs and SDP-LMPs.

\begin{theorem}
	\label{thm:DLMP}
	For a radial power network, SDP-LMPs are SOCP-DLMPs and vice-versa.
\end{theorem}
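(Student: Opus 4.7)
The plan is to leverage the known equivalence between $\Psdp$ and $\Psocp$ on radial networks (cited in the paper as \cite{bose2015equivalent, madani2014convex}) and then identify the two price vectors through a parametric-value/sensitivity argument that is already used in Section \ref{sec:prices}.

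First I would make the feasible-set bijection explicit. Given $\v{W}$ feasible for $\Psdp$, set $w_k := W_{kk}$, $P_{k\ell} := \trace(\v{\Phi}_{k\ell}\v{W})$, $Q_{k\ell} := \trace(\v{\Psi}_{k\ell}\v{W})$, and $J_{k\ell} := |y_{k\ell}|^{2}(W_{kk}+W_{\ell\ell}-2\,\Re W_{k\ell})$. A direct computation (using the same algebra that lets one rewrite $\v{V}^{\Hsf}\v{M}\v{V}$ as $\trace(\v{M}\v{W})$) shows that the SDP constraints \eqref{eq:W.pBalance}--\eqref{eq:W.v} translate exactly to the linear constraints \eqref{eq:SOCP.pV}--\eqref{eq:SOCP.vPQ}, while $\v{W}\succeq 0$ forces the $2\times 2$ minor inequality $W_{kk}W_{\ell\ell}\geq |W_{k\ell}|^{2}$, which is precisely \eqref{eq:SOCP.socp}. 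Conversely, acyclicity of the distribution grid allows any feasible SOCP tuple to be lifted to a Hermitian PSD $\v{W}$ by choosing voltage phases consistently along the unique path from a reference bus, which is the matrix-completion argument of \cite{bose2015equivalent}. The two objectives coincide because they depend only on $(\v{p}^G,\v{q}^G)$.

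Next I would apply the sensitivity characterization of the prices. The bijection above implies that, as functions of nodal demands, the parametric optimal values satisfy $J^{\star}_{\SDP}(\v{p}^D,\v{q}^D)=J^{\star}_{\SOCP}(\v{p}^D,\v{q}^D)$, and both are jointly convex. Under the Slater-type assumption maintained throughout the paper, strong duality holds for both relaxations, so SDP-LMPs $(\v{\lambda}^{p,\star},\v{\lambda}^{q,\star})$ are precisely the subgradients of $J^{\star}_{\SDP}$ with respect to $(\v{p}^D,\v{q}^D)$, and SOCP-DLMPs $(\v{\rho}^{p,\star},\v{\rho}^{q,\star})$ are the subgradients of $J^{\star}_{\SOCP}$ (envelope theorem, \cite[Chapter 7]{still2018lectures}). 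Equality of the two value functions forces $\partial_{(\v{p}^D,\v{q}^D)} J^{\star}_{\SDP}=\partial_{(\v{p}^D,\v{q}^D)} J^{\star}_{\SOCP}$, giving both inclusions simultaneously. Equivalently, one can observe that the constraints \eqref{eq:W.pBalance}--\eqref{eq:W.qBalance} and \eqref{eq:SOCP.pV}--\eqref{eq:SOCP.qV} are the same right-hand-side perturbation of the physical nodal balance and that the KKT multipliers attached to such perturbations transport through the bijection.

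The main obstacle is the lifting direction $\Psocp\to\Psdp$, since the branch-flow variables discard phase information whereas $\v{W}$ retains it; this is exactly the step where the radial hypothesis is indispensable and where I would cite \cite{bose2015equivalent} rather than redo the phase-reconstruction argument. A secondary concern is non-uniqueness of optimal duals, but the subdifferential formulation automatically handles this by comparing sets rather than individual vectors, so the statement ``SDP-LMPs are SOCP-DLMPs and vice-versa'' should be read, and proved, at the level of the sets of optimal multipliers.
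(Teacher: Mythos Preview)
Your proposal is correct and rests on the same key ingredient as the paper: the feasible-set equivalence $\Wset=\Xset$ for radial networks from \cite{bose2015equivalent}. The paper's proof differs only in the final step. Rather than invoking the envelope theorem to equate subdifferentials of the two value functions, the paper rewrites $\Psdp$ so that the nodal balance constraints \eqref{eq:W.pBalance}--\eqref{eq:W.qBalance} appear in exactly the same form as \eqref{eq:SOCP.pV}--\eqref{eq:SOCP.qV}, with the remaining structure packaged into the single set constraint $(\v{p}_e,\v{p}_{e'},\v{q}_e,\v{q}_{e'},\v{w})\in\Wset$; replacing $\Wset$ by $\Xset$ then makes the two problems literally identical, so the sets of optimal multipliers on the balance constraints coincide trivially. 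Your subdifferential route is equally valid and, as you note, makes the set-valued nature of the conclusion explicit; the paper's route is slightly shorter because it sidesteps any sensitivity machinery once the reformulation is in hand.
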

\begin{proof}
	$\Psdp$ can be written as 
	\begin{alignat}{2}
	\begin{aligned}
	& {\text{minimize}} && \ \sum_{k=1}^n c_k(p^G_k, q^G_k), 
	\\
	& \text{subject to} 
	&& \  p^G_k - p^D_k = \sum_{\ell' : k \to \ell'} p_{k\ell'} + \sum_{\ell' : \ell' \to k} p_{k \ell'}, \\
	&&& \   q^G_k - q^D_k = \sum_{\ell' : k \to \ell'} q_{k\ell'} + \sum_{\ell' : k \to \ell'}q_{k \ell'}, \\
	&&& \  p_{k\ell} \leq f_{k \ell}, \ p_{\ell k} \leq f_{k\ell},
	\ \eqref{eq:SOCP.pqG}, \ \eqref{eq:SOCP.v}
	\\
	&&& \ \text{for } k \in \Nset, \ k\to\ell \in \Eset, 
	\\
	&&& \  \left(\v{p}_e, \v{p}_{e'}, \v{q}_e, \v{q}_{e'}, \v{w} \right) \in \Wset.
	\end{aligned}
	\label{eq:Psdp.1}
	\end{alignat}
	The vectors $\v{p}_e$ and $\v{p}_{e'}$ collect $p_{k \ell}$  and $p_{\ell k}$ for $k\to \ell \in \Eset$, respectively. Similarly, $\v{q}_e$ and $\v{q}_{e'}$ collect all $q_{k \ell}$  and $q_{\ell k}$ for $k\to \ell \in \Eset$, respectively. The set $\Wset$ is defined in \eqref{eq:W.def}. From \cite[Theorem 6]{bose2015equivalent}, we have 
	$\Xset = \Wset,$
	where $\Xset$ is as defined in \eqref{eq:X.def}. Replacing $\Wset$ by $\Xset$, \eqref{eq:Psdp.1} becomes $\Psocp$. Thus, the set of optimal dual multipliers of the power balance constraints in \eqref{eq:Psdp.1} and $\Psocp$ coincide, completing the proof.
\end{proof}
SDP-LMPs restricted to radial networks coincide with SOCP-DLMPs proposed in \cite{PapavasiliouDLMPs, Caramanis}, according to Theorem \ref{thm:DLMP}. The rest of the results in this paper on SDP-LMPs then characterize properties of SOCP-DLMPs and provide the economic rationale behind using these DLMPs to compensate DERs. In particular, when the SOCP relaxation is exact, these prices are AC-LMPs that support an efficient market equilibrium and ensure revenue adequacy (with non-binding voltage lower limits). By exact, we mean that \eqref{eq:SOCP.socp} is met with an equality at an optimum of $\Psocp$. When the relaxation is not exact, these prices seek to minimize side-payments for DERs to follow a prescribed dispatch signal.

\begin{figure}[ht]
	\hrule
	\begin{align}
		\begin{aligned}
			\Wset :=  \{ \left(\v{p}_e, \v{p}_{e'}, \v{q}_e, \v{q}_{e'}, \v{w} \right)  \ \vert \
			&  p_{k \ell} = \trace(\v{\Phi}_{k\ell} \v{W}), 
			\ p_{\ell k} = \trace(\v{\Phi}_{\ell k} \v{W}), \
			q_{k \ell} = \trace(\v{\Psi}_{k\ell} \v{W}), 
			\ q_{\ell k} = \trace(\v{\Psi}_{\ell k} \v{W}), \\
			&  w_k = \trace(\bone_k \bone_k^\mathsf{T} \v{W}) 
			\text{ for some } \v{W} \succeq 0 \
			\text{for } k \in \Nset, k\to \ell \in \Eset\}.
		\end{aligned}
		\label{eq:W.def}
	\end{align}
	\begin{align}
		\begin{aligned}
			\Xset := \{ \left(\v{p}_e, \v{p}_{e'}, \v{q}_e, \v{q}_{e'}, \v{w} \right)  \ \vert \
			&   p_{k\ell} = P_{k \ell}, \  p_{\ell k} = r_{k \ell} J_{k \ell} - P_{k\ell}, \
			q_{k\ell} = Q_{k \ell}, \  q_{\ell k} = x_{k \ell} J_{k \ell} - Q_{k\ell}, \\
			&  w_\ell = w_k - 2 (P_{k\ell} r_{k \ell} + Q_{k \ell} x_{k\ell})
			+ (r_{k\ell}^2 + x_{k\ell}^2) J_{k\ell}, \
			P_{k\ell}^2 + Q_{k\ell}^2 \leq J_{k \ell} w_k, w_k \geq 0 \\ 
			&  \text{for some } \v{P}, \v{Q}, \v{J} \text{ for } k \in \Nset, k\to \ell \in \Eset
			\}.
		\end{aligned}
		\label{eq:X.def}
	\end{align}
	\hrule
\end{figure}

We computed SOCP-DLMPs on a 15-bus radial network from \cite{PapavasiliouDLMPs} with the  modifications $p_{11}^D = 0.250$ and $q_{11}^D = 0.073$.\footnote{These figures are adapted from our preliminary work in \cite{winnickiISGT}.} The SOCP-DLMPs are portrayed as heat-maps in Figures \ref{fig:case2.p}, \ref{fig:case2.q}. Figures \ref{fig:case1.p}, \ref{fig:case1.q} draws the prices upon increasing power demands at bus 11.  Compared to Figures \ref{fig:case2.p}, \ref{fig:case2.q}, the results in Figures \ref{fig:case1.p}, \ref{fig:case1.q} reveal that real power demands substantially affect the real power prices. In 
Figures \ref{fig:comparison.p},\ref{fig:comparison.q}, we plot the prices upon altering the voltage limits at various buses. These outcomes, when compared to Figures \ref{fig:case2.p}, \ref{fig:case2.q}, show that voltage limits significantly impact the reactive power prices. One expects such a behavior, given the nature of the coupling between reactive power injections and voltage magnitudes in the power flow equations. In all these experiments, the relaxation was found to be exact. Thus, the dispatch from $\Psocp$ solves $\Pac$ and the SOCP-DLMPs are also AC-LMPs. The plots illustrate the locational nature of these prices. For all experiments, we obtained MS$\geq 0$.

\begin{figure}[ht]
        \centering
        \vspace{-16pt} 
        \subfloat[P.1][$\v{\rho}^{p, \star}$]{\includegraphics[trim=30pt 70pt 30pt 50pt ,clip,width=0.30\textwidth]{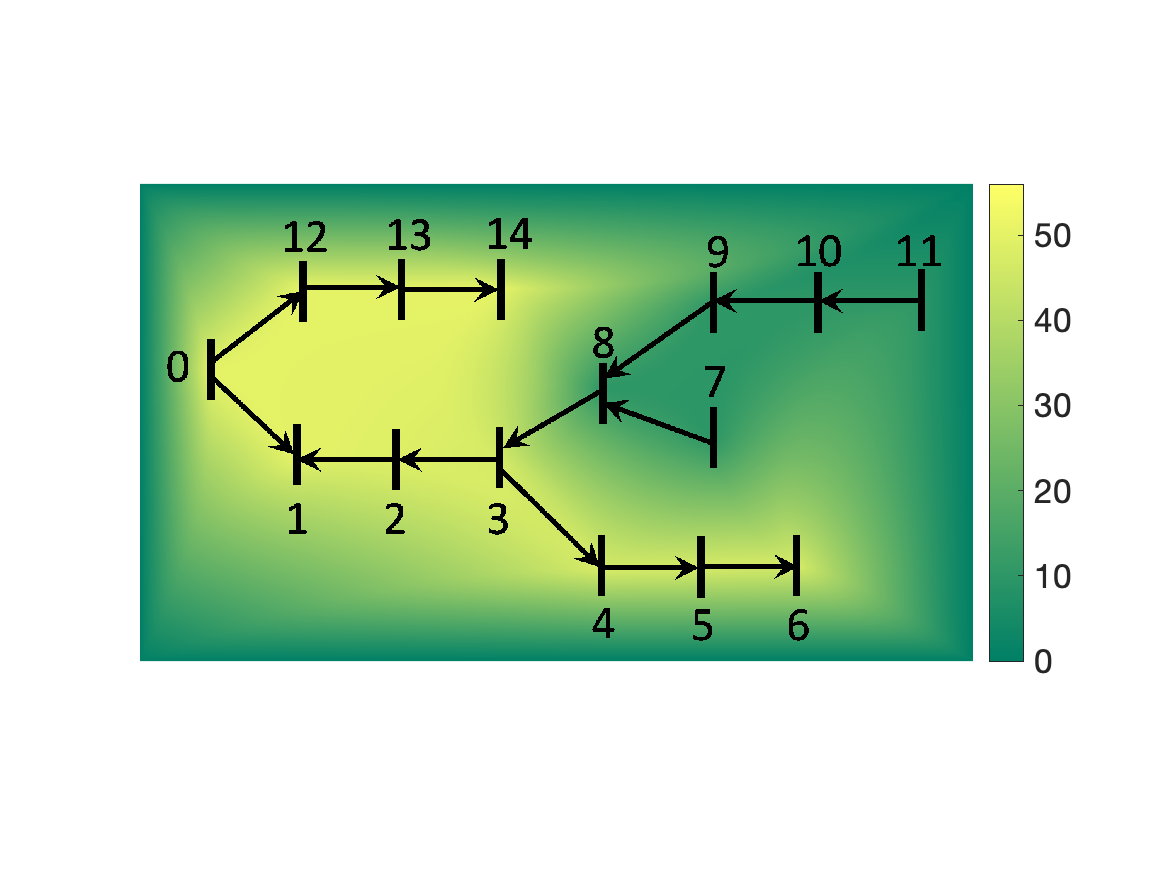} \label{fig:case2.p}}
        \subfloat[Q.1][$\v{\rho}^{q, \star}$]{\includegraphics[
       trim=30pt 70pt 30pt 50pt ,clip,width=0.30\textwidth]{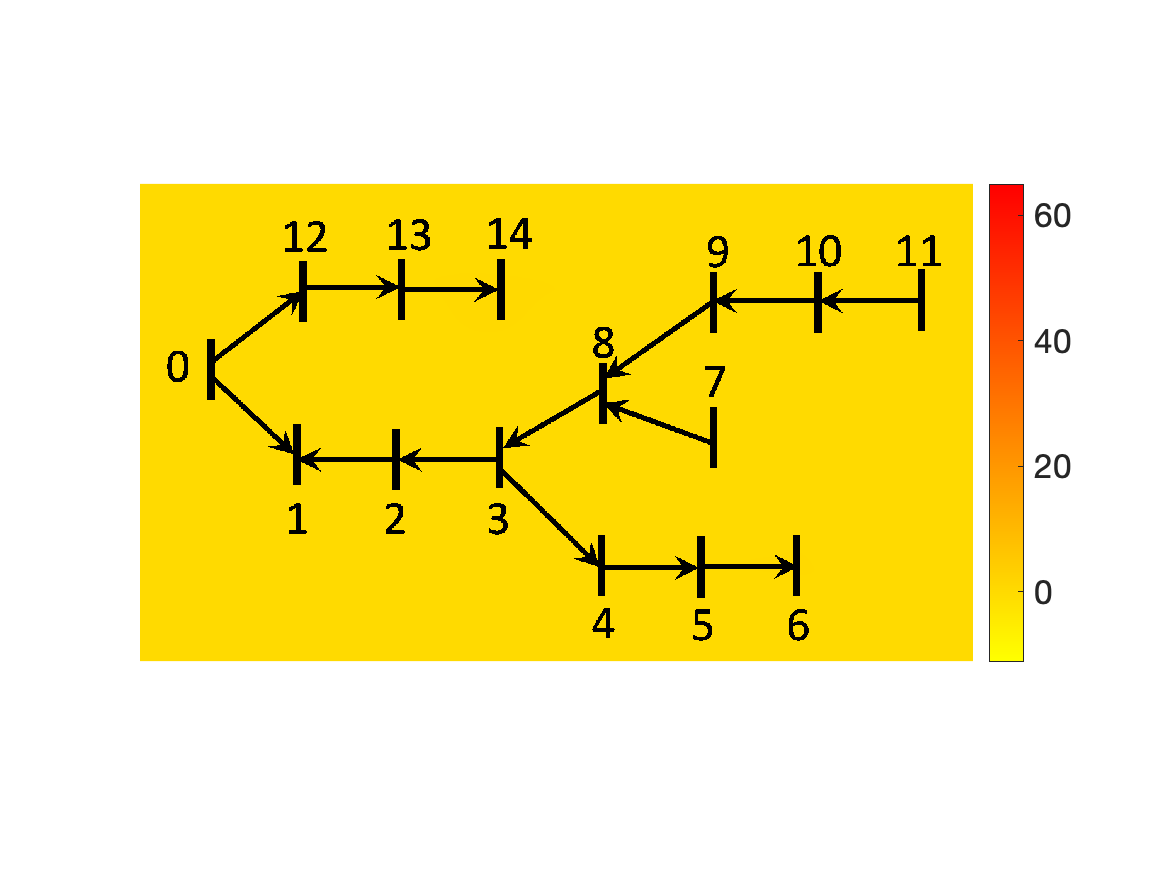} \label{fig:case2.q}}
\\          
        \vspace{-10pt} 
        \subfloat[P.2][$\v{\rho}^{p, \star}$]{\includegraphics[
        trim=30pt 70pt 30pt 50pt ,clip,width=0.30\textwidth]{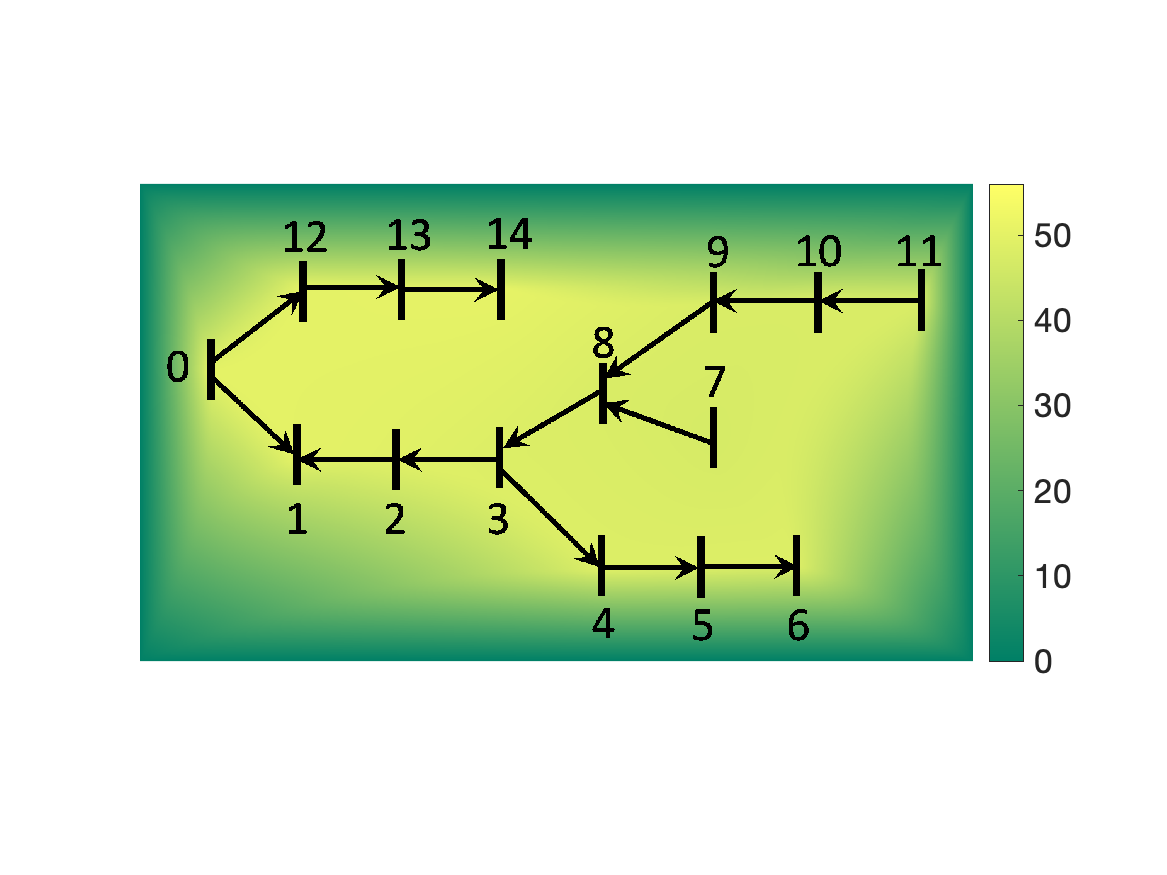} \label{fig:case1.p}}
        \subfloat[Q.2][$\v{\rho}^{q, \star}$]{\includegraphics[
        trim=30pt 70pt 30pt 50pt ,clip,width=0.30\textwidth]{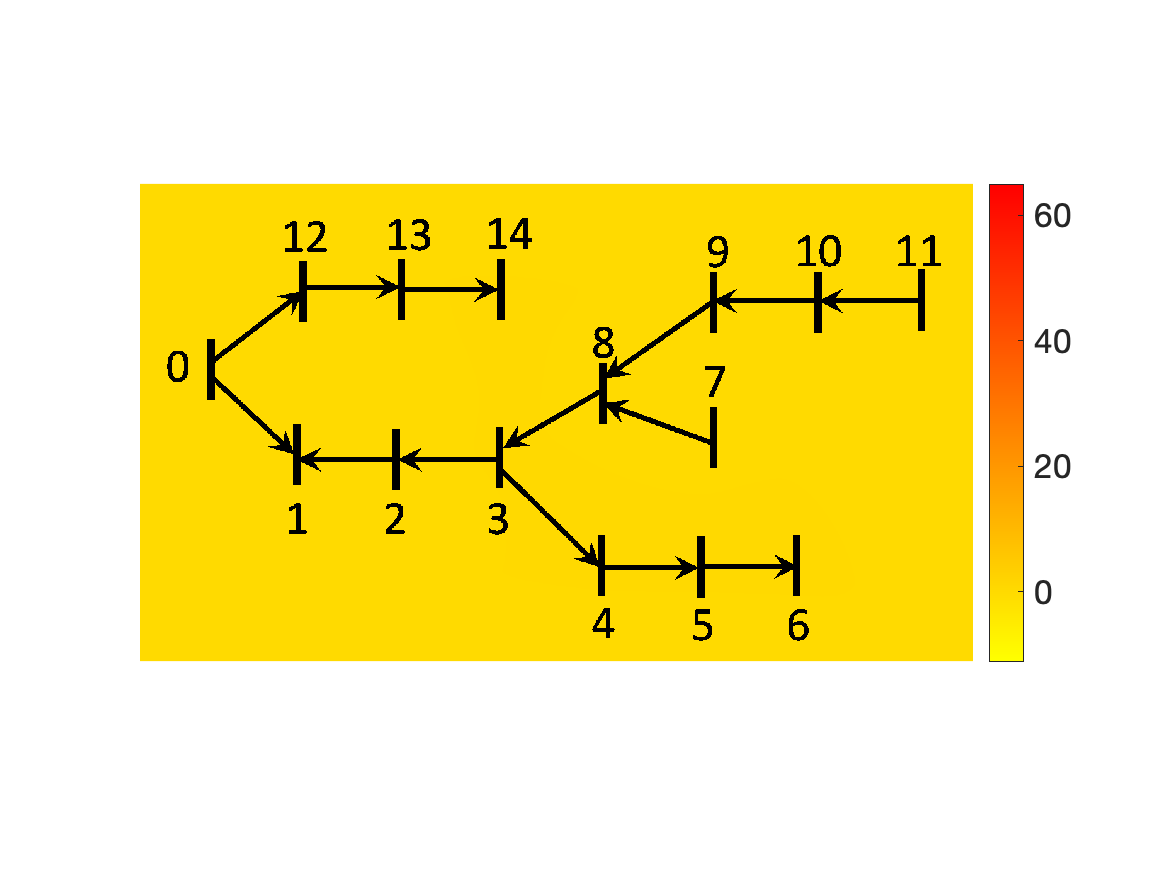} \label{fig:case1.q}}
\\       
        \vspace{-10pt} 
  	    \subfloat[P.3][$\v{\rho}^{p, \star}$]{\includegraphics[
        trim=30pt 70pt 30pt 50pt ,clip,width=0.30\textwidth]{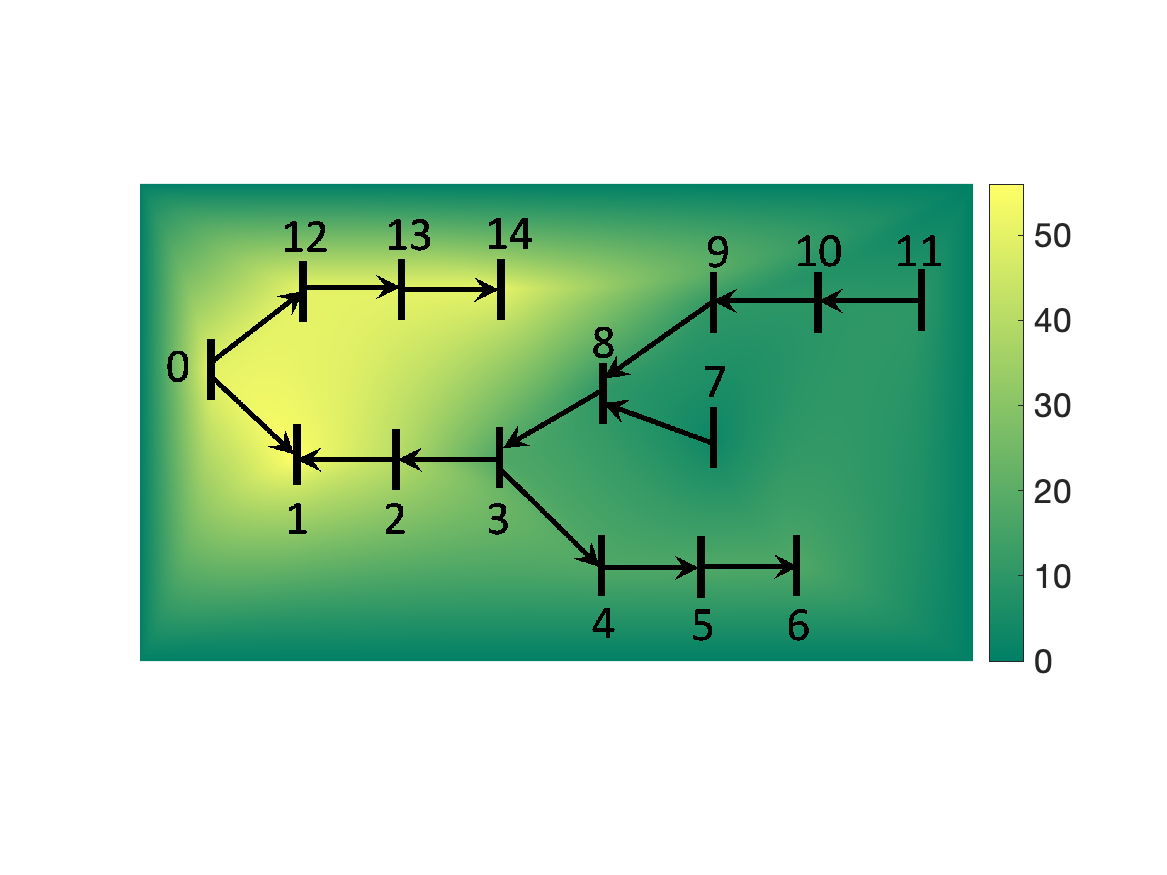} \label{fig:comparison.p}}
        \subfloat[Q.3][$\v{\rho}^{q, \star}$]{\includegraphics[
        trim=30pt 70pt 30pt 50pt ,clip,width=0.30\textwidth]{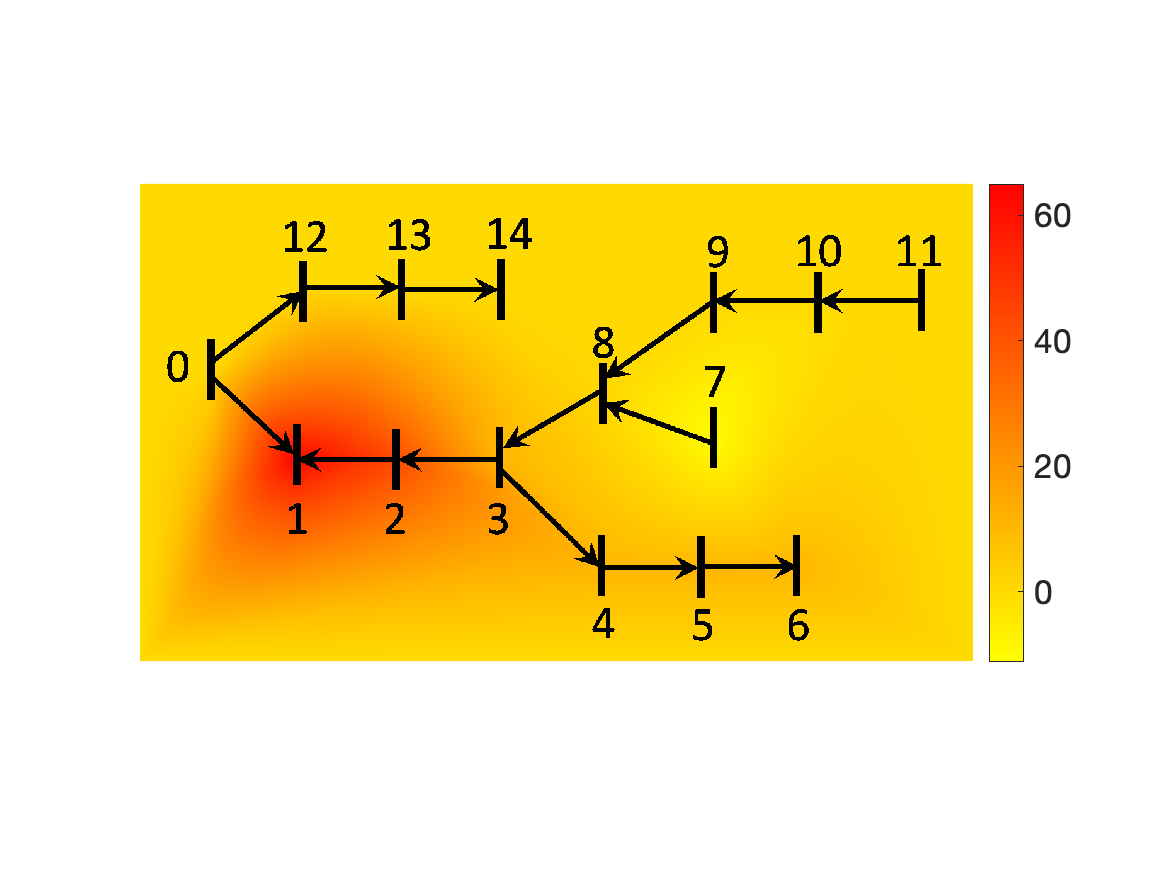} \label{fig:comparison.q}}
    
    \caption{Plots (a), (b) show heat-maps of SOCP-DLMPs (that equal AC-LMPs) on the 15-bus radial network adopted from \cite{PapavasiliouDLMPs}. Plots (c), (d) are derived with $p_{11}^D = 0.350$, and (e), (f) with $\ol{v}_i^2 = 1.05, i=0, \ldots, 10, \ \ul{v}_1^2 = 1$. Arrows indicate the edge directions we considered in $\Psocp$.}               
    \label{fig:heat}
\end{figure}

\section{Conclusions and Future Directions}
\label{sec:conc}
In this paper, we analyzed two candidate pricing mechanisms for market clearing with AC power flow. One set of prices were derived from multipliers that support a KKT system with locally optimal dispatch solutions. The other set of prices were derived from the SDP relaxation of the economic dispatch problems. We established several results that compared these two prices. With zero duality gap, the prices behave similarly as long as the dispatch solution obtained is indeed globally optimal. Otherwise, they can behave differently. SDP-based prices are defined from the Lagrangian dual of the nonconvex market clearing problem and in that respect, bear similarities to convex hull prices defined to tackle nonconvexities in cost structures. Our work shows that while their origins are indeed similar, there are important differences between the two. We also analyzed electricity market-relevant properties such as  revenue adequacy and market equilibrium for the two pricing mechanisms. For transmission networks, these results complement the properties of LMPs derived from linearized power flow equations. When applied to the distribution networks, these results provide new insights into the properties of proposed DLMPs. 

We are interested in two directions for future research. First, we aim to study how our analysis can be generalized to also handle nonconvexities that arise from cost structures. That is, we want to study price formation when we consider commitment decisions and startup costs together with AC power flow equations in market clearing. Second, we want to pursue extensions of our analysis to the stochastic setting that explicitly accounts for uncertainties in renewable supply. These two challenging directions will allow us to better understand pricing in electricity market environments without having to rely on the theory of LMPs that are typically derived from a deterministic economic dispatch problem with linearized power flow equations.






\bibliographystyle{alpha}
\bibliography{myrefs}

\newcommand{\etalchar}[1]{$^{#1}$}
\begin{thebibliography}{HWO{\etalchar{+}}15}

\bibitem[BCS84]{schweppe1984}
Roger~E. Bohn, Michael~C. Caramanis, and Fred~C. Schweppe.
\newblock Optimal pricing in electrical networks over space and time.
\newblock {\em The RAND Journal of Economics}, 15(3):360--376, 1984.

\bibitem[Ber99]{Bertsekas/99}
D.P. Bertsekas.
\newblock {\em Nonlinear Programming}.
\newblock Athena Scientific, 1999.

\bibitem[BLTH15]{bose2015equivalent}
Subhonmesh Bose, Steven~H Low, Thanchanok Teeraratkul, and Babak Hassibi.
\newblock Equivalent relaxations of optimal power flow.
\newblock {\em IEEE Transactions on Automatic Control}, 60(3):729--742, 2015.

\bibitem[BS13]{bonnans2013perturbation}
J~Fr{\'e}d{\'e}ric Bonnans and Alexander Shapiro.
\newblock {\em Perturbation analysis of optimization problems}.
\newblock Springer Science \& Business Media, 2013.

\bibitem[CNH{\etalchar{+}}16]{Caramanis}
Michael Caramanis, Elli Ntakou, William~W Hogan, Aranya Chakrabortty, and Jens
  Schoene.
\newblock Co-optimization of power and reserves in dynamic {T\&D} power markets
  with nondispatchable renewable generation and distributed energy resources.
\newblock {\em Proceedings of the IEEE}, 104(4):807--836, 2016.

\bibitem[{CVX}12]{cvx}
{CVX Research, Inc.}
\newblock {CVX}: Matlab software for disciplined convex programming, version
  2.0.
\newblock \url{http://cvxr.com/cvx}, August 2012.

\bibitem[FL13]{Farivar}
M.~{Farivar} and S.~H. {Low}.
\newblock Branch flow model: Relaxations and convexification--part {I}.
\newblock {\em IEEE Transactions on Power Systems}, 28(3):2554--2564, Aug 2013.

\bibitem[GB08]{gb08}
M.~Grant and S.~Boyd.
\newblock Graph implementations for nonsmooth convex programs.
\newblock In V.~Blondel, S.~Boyd, and H.~Kimura, editors, {\em Recent Advances
  in Learning and Control}, Lecture Notes in Control and Information Sciences,
  pages 95--110. Springer-Verlag Limited, 2008.
\newblock \url{http://stanford.edu/~boyd/graph_dcp.html}.

\bibitem[GHP{\etalchar{+}}07]{Gribik}
Paul~R Gribik, William~W Hogan, Susan~L Pope, et~al.
\newblock Market-clearing electricity prices and energy uplift.
\newblock {\em Cambridge, MA}, pages 1--46, 2007.

\bibitem[GLTL15]{Gan}
L.~{Gan}, N.~{Li}, U.~{Topcu}, and S.~H. {Low}.
\newblock Exact convex relaxation of optimal power flow in radial networks.
\newblock {\em IEEE Transactions on Automatic Control}, 60(1):72--87, Jan 2015.

\bibitem[GNB20]{Garcia}
M.~{Garcia}, H.~{Nagarajan}, and R.~{Baldick}.
\newblock Generalized convex hull pricing for the {AC} optimal power flow
  problem.
\newblock {\em IEEE Transactions on Control of Network Systems}, pages 1--1,
  2020.

\bibitem[HB17]{Hua}
B.~{Hua} and R.~{Baldick}.
\newblock A convex primal formulation for convex hull pricing.
\newblock {\em IEEE Transactions on Power Systems}, 32(5):3814--3823, 2017.

\bibitem[HR03]{Hogan2003}
William~W. Hogan and Brendan~J. Ring.
\newblock On minimum-uplift pricing for electricity markets.
\newblock [Online]. Available:
  \url{https://scholar.harvard.edu/whogan/files/minuplift_031903.pdf}, March
  2003.

\bibitem[HWO{\etalchar{+}}15]{Huang}
S.~{Huang}, Q.~{Wu}, S.~S. {Oren}, R.~{Li}, and Z.~{Liu}.
\newblock Distribution locational marginal pricing through quadratic
  programming for congestion management in distribution networks.
\newblock {\em IEEE Transactions on Power Systems}, 30(4):2170--2178, July
  2015.

\bibitem[LCL12]{NaLi}
N.~{Li}, L.~{Chen}, and S.~H. {Low}.
\newblock Exact convex relaxation of {OPF} for radial networks using branch
  flow model.
\newblock In {\em 2012 IEEE Third International Conference on Smart Grid
  Communications (SmartGridComm)}, pages 7--12, Nov 2012.

\bibitem[LL12]{lavaei2012zero}
Javad Lavaei and Steven~H Low.
\newblock Zero duality gap in optimal power flow problem.
\newblock {\em IEEE Transactions on Power Systems}, 27(1):92--107, 2012.

\bibitem[LOOC17]{Lipka}
P.~{Lipka}, S.~S. {Oren}, R.~P. {O’Neill}, and A.~{Castillo}.
\newblock Running a more complete market with the {SLP-IV-ACOPF}.
\newblock {\em IEEE Transactions on Power Systems}, 32(2):1139--1148, 2017.

\bibitem[LWO14]{Li}
R.~{Li}, Q.~{Wu}, and S.~S. {Oren}.
\newblock Distribution locational marginal pricing for optimal electric vehicle
  charging management.
\newblock {\em IEEE Transactions on Power Systems}, 29(1):203--211, Jan 2014.

\bibitem[MSL14]{madani2014convex}
Ramtin Madani, Somayeh Sojoudi, and Javad Lavaei.
\newblock Convex relaxation for optimal power flow problem: Mesh networks.
\newblock {\em IEEE Transactions on Power Systems}, 30(1):199--211, 2014.

\bibitem[NC14]{Ntakou}
Elli Ntakou and Michael Caramanis.
\newblock Price discovery in dynamic power markets with low-voltage
  distribution-network participants.
\newblock In {\em 2014 IEEE PES T\&D Conference and Exposition}, pages 1--5.
  IEEE, 2014.

\bibitem[OSH{\etalchar{+}}05]{ONEILL}
Richard~P. O'Neill, Paul~M. Sotkiewicz, Benjamin~F. Hobbs, Michael~H. Rothkopf,
  and William~R. Stewart.
\newblock Efficient market-clearing prices in markets with nonconvexities.
\newblock {\em European Journal of Operational Research}, 164(1):269 -- 285,
  2005.

\bibitem[{Pap}18]{PapavasiliouDLMPs}
A.~{Papavasiliou}.
\newblock Analysis of distribution locational marginal prices.
\newblock {\em IEEE Transactions on Smart Grid}, 9(5):4872--4882, Sep. 2018.

\bibitem[Sti18]{still2018lectures}
Georg Still.
\newblock Lectures on parametric optimization: An introduction.
\newblock {\em Optimization Online}, 2018.

\bibitem[SZZL16]{Schiro}
D.~A. {Schiro}, T.~{Zheng}, F.~{Zhao}, and E.~{Litvinov}.
\newblock Convex hull pricing in electricity markets: Formulation, analysis,
  and implementation challenges.
\newblock {\em IEEE Transactions on Power Systems}, 31(5):4068--4075, Sep.
  2016.

\bibitem[TMS{\etalchar{+}}06]{Mount}
Robert~J Thomas, Timothy~D Mount, Richard~E Schuler, William~D Schulze, Ray~D
  Zimmerman, Daniel~L Shawhan, and David Toomey.
\newblock Markets for reactive power and reliability: a white paper.
\newblock [Online]. Available:
  \url{https://certs.lbl.gov/publications/markets-reactive-power-and}, Dec
  2006.

\bibitem[VRP02]{Vazquez}
C.~{Vazquez}, M.~{Rivier}, and I.~J. {Perez-Arriaga}.
\newblock Production cost minimization versus consumer payment minimization in
  electricity pools.
\newblock {\em IEEE Transactions on Power Systems}, 17(1):119--127, 2002.

\bibitem[WNB20]{winnickiISGT}
Anna Winnicki, Mariola Ndrio, and Subhonmesh Bose.
\newblock On convex relaxation-based distribution locational marginal prices.
\newblock In {\em 2020 IEEE Power \& Energy Society Innovative Smart Grid
  Technologies Conference (ISGT)}, pages 1--5. IEEE, 2020.

\bibitem[WSV12]{wolkowicz2012handbook}
Henry Wolkowicz, Romesh Saigal, and Lieven Vandenberghe.
\newblock {\em Handbook of semidefinite programming: theory, algorithms, and
  applications}, volume~27.
\newblock Springer Science \& Business Media, 2012.

\bibitem[YHB18]{Yuan2018}
Z.~{Yuan}, M.~R. {Hesamzadeh}, and D.~R. {Biggar}.
\newblock Distribution locational marginal pricing by convexified {ACOPF} and
  hierarchical dispatch.
\newblock {\em IEEE Transactions on Smart Grid}, 9(4):3133--3142, July 2018.

\bibitem[ZB02]{Zhong}
J.~{Zhong} and K.~{Bhattacharya}.
\newblock Toward a competitive market for reactive power.
\newblock {\em IEEE Transactions on Power Systems}, 17(4):1206--1215, 2002.

\bibitem[ZLY{\etalchar{+}}08]{Zhao}
F.~{Zhao}, P.~B. {Luh}, J.~H. {Yan}, G.~A. {Stern}, and S.~{Chang}.
\newblock Payment cost minimization auction for deregulated electricity markets
  with transmission capacity constraints.
\newblock {\em IEEE Transactions on Power Systems}, 23(2):532--544, 2008.

\bibitem[ZMST10]{zimmerman2010matpower}
Ray~Daniel Zimmerman, Carlos~Edmundo Murillo-S{\'a}nchez, and Robert~John
  Thomas.
\newblock Matpower: Steady-state operations, planning, and analysis tools for
  power systems research and education.
\newblock {\em IEEE Transactions on power systems}, 26(1):12--19, 2010.

\end{thebibliography}

\appendix

\begin{lemma}
	\label{lemma:param.fenchel}
	For $f, g:\Xset \to \Rset$, ${a} \in \Rset$ and $\Aset \subseteq \Rset$, define
	\begin{align}
	\begin{aligned}
	J^\star({a}) 
	:= \inf_{\v{x} \in \Xset} f(\v{x}), \text{ subject to } {a} - {g}(\v{x}) \in \Aset.
	\end{aligned}
	\end{align}
	Then, for any ${\xi} \in \Rset$, we have
	\begin{align}
	\begin{aligned}
	J^{\star, \fc}({\xi})
	&=\delta^{\fc}_\Aset({\xi}) - \inf_{\v{x} \in \Xset} \left\{ f(\v{x}) - \xi {g}(\v{x}) \right\}.
	\end{aligned}
	\end{align}
\end{lemma}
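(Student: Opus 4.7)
The plan is to start from the definition of the Fenchel conjugate and rewrite the constrained minimization in $J^\star$ using the support function of $\Aset$. Specifically, first I would observe that the constraint $a - g(\v{x}) \in \Aset$ can be encoded via the indicator $\delta_\Aset$, yielding
\begin{align*}
J^\star(a) = \inf_{\v{x} \in \Xset}\bigl\{f(\v{x}) + \delta_\Aset(a - g(\v{x}))\bigr\}.
\end{align*}
Substituting this representation into the definition $J^{\star,\fc}(\xi) = \sup_{a \in \Rset}\{\xi a - J^\star(a)\}$ converts the subtraction of an infimum into a double supremum over both $a$ and $\v{x}$.

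Next I would swap the order of the two suprema (which is free of any subtle issue since they are both suprema of the same expression) and perform the change of variable $b := a - g(\v{x})$ in the inner supremum over $a$. The shift produces an additive term $\xi g(\v{x})$ that depends only on $\v{x}$ and decouples from the supremum over $b$. What remains inside the inner supremum is $\sup_{b \in \Rset}\{\xi b - \delta_\Aset(b)\}$, which is exactly $\delta^{\fc}_\Aset(\xi)$. Pulling the $\v{x}$-dependent terms out and recognizing a sign flip, the outer supremum over $\v{x} \in \Xset$ converts $\sup_{\v{x}}\{\xi g(\v{x}) - f(\v{x})\}$ into $-\inf_{\v{x}}\{f(\v{x}) - \xi g(\v{x})\}$, delivering the claimed identity.

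The argument is essentially a routine manipulation with no serious obstacle, since all suprema are taken over scalar or ambient-space variables without joint constraints after the shift. The only point to be careful about is the interchange of the two suprema and the validity of the translation $b = a - g(\v{x})$ for every fixed $\v{x} \in \Xset$; both are immediate because $\v{x}$ ranges over a set independent of $a$, and for fixed $\v{x}$, the map $a \mapsto a - g(\v{x})$ is a bijection on $\Rset$ so the supremum over $a \in \Rset$ equals the supremum over $b \in \Rset$. No further hypothesis on $f$, $g$, or $\Aset$ (such as convexity or closedness) is needed for this conjugate computation, which explains why the lemma can be applied freely to the nonconvex functional $J^\star_\AC$ in the proof of Proposition~\ref{lemma:biconj}.
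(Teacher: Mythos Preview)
Your proposal is correct and follows essentially the same route as the paper: rewrite the constraint via the indicator $\delta_\Aset$, interchange the two suprema, and isolate $\delta_\Aset^{\fc}(\xi)$ by the shift $b = a - g(\v{x})$ (the paper phrases this shift as adding and subtracting $\xi g(\v{x})$, which is the same move). Nothing further is needed.
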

\begin{proof} Using the definition of $\delta_\Aset$ in \eqref{eq:supp.f}, we have
	\begin{align}
	\begin{aligned}
	J^{\star, \fc}({\xi})
	&= \sup_{a} \left\{ \xi a - \inf_{\v{x} \in \Xset} \left\{ f(\v{x}) + \delta_\Aset(a - g(\v{x})) \right\} \right\}
	\\
	&= \sup_{\v{x} \in \Xset} \left\{ -f(\v{x}) + \sup_{a} \left\{ \xi a - \delta_\Aset(a-g(\v{x})) \right\} \right\}
	\\
	&= \sup_{x \in \Xset} \left\{ -f(\v{x}) + \xi g(\v{x}) \right.
	\\
	& \qquad \qquad + \left. \sup_{a} \left\{ \xi (a - g(\v{x})) - \delta_\Aset(a-g(\v{x})) \right\} \right\}
	\\
	&= \sup_{\v{x} \in \Xset} \left\{ -f(\v{x}) + \xi g(\v{x}) + \delta^{\fc}_\Aset(\xi) \right\}.
	\end{aligned}
	\end{align}
	The result then follows from the fact that $\delta^{\fc}_\Aset(\xi)$ does not depend on $\v{x}$.
\end{proof}

\end{document}